\def\S{{\mathbf S}}
\def\E{{\mathbf E}}
\def\R{{\mathbb R}}
\newcommand{\clr}{\color{black}}
\newcommand{\clrr}{\color{black}}
\newcommand{\clb}{\color{black}}
\def\EMA{\mathrm{EMA}}
\def\SMA{\mathrm{SMA}}
\def\BB{\mathrm{BB}}
\def\ARP{\mathrm{ARP}}
\def\MACD{\mathrm{MACD}}
\def\MOM{\mathrm{MOM}}
\theoremstyle{definition}
\newtheorem{proposition}{Proposition}[section]
\newtheorem{lemma}{Lemma}[section]
\newtheorem{corollary}{Corollary}[section]
\tiny\color{gray},   
\begin{document}
\title{Breaking the Trend: How to Avoid Cherry-Picked Signals}

\author{Sebastien Valeyre \thanks{Machina Capital, 21 rue de la Banque, 75002 Paris, France, svaleyre@machinacap.com} \thanks{Valeyre Research, 114 rue d'antibes 06400  Cannes, France, sb.valeyre@gmail.com or sv@valeyreresearch.com } }



\maketitle

\date{\today}


\begin{abstract}

	Our empirical results show an impressive fit with the pretty complex {\clb theoretical} Sharpe formula of a trend-following strategy depending on the parameter of the signal, which was derived by \cite{Grebenkov}. That empirical fit convinces us that a mean-reversion process with only one time scale is enough to model, in a pretty precise way, the reality of the trend-following mechanism at the average scale of CTAs and as a consequence, using only one simple $\EMA$, appears optimal to capture the trend. As a consequence, using a complex basket of different complex indicators as signal, do not seem to be so rational or optimal and exposes to the risk of cherry-picking.

\end{abstract}


\thanks{I thank  Prof{\clb essor} Valeriy Zakamulin for his feedback and valuable insights. }

\newpage

\section{Introduction}
\label{intro}
{Practitioners of the trend-following strategies usually optimize their strategies empirically by selecting parameters that maximize the simulated reward-to-risk ratio,  typically measured by the Sharpe ratio, {\clb when replaying the historical scenarii through past returns}. However, this empirical approach can lead to overfitting, especially when optimization is performed on the same historical dataset used for evaluation. { Many financial researchers have proposed various methods to address this issue, which can be broadly classified into three categories, as noted by \cite{Koshiyama}: data snooping, overestimated performance, and cross-validation evaluation. They introduced a technique to mitigate financial overfitting called the Covariance-Penalty Correction, which adjusts a risk metric downward based on the number of parameters and the volume of data underpinning a trading strategy. }}

{An alternative is to model the returns of the underlying assets using stochastic processes. In some cases, this allows the Sharpe ratio to be derived mathematically through a formula depending on the parameters of the strategy. When this is possible, we refer to it as the theoretical Sharpe ratio, in contrast to the empirical one.}

{It is also possible to determine the set of parameters that maximizes the theoretical Sharpe ratio—this is known as the theoretical optimal Sharpe ratio, as opposed to the common empirical optimum obtained through backtesting.}

{A theoretical Sharpe ratio is always interesting if the model deriving the returns is realistic enough, as it allows validat{\clb ing} and optimiz{\clb ing} a strategy without the risk of overfitting.}

{The main objective of this paper is to empirically validate the theoretical Sharpe ratio model proposed by \cite{Grebenkov} and therefore their assumption to model trends as gaussian mean reversion stochastic processes with only one time scale.}

\section{Literature}

The usual recipe to determine a portfolio in the CTA's industry is to use a blend of many different technical indicators as {a} signal for each underlying instrument (most often individual indicator gives either a long or a short position\footnote{ {\clb \cite{Sepp} classified the methodologies for sizing positions based on signals into three categories:  (1) European trend-following, which assumes continuous position sizes proportional to signal strength — an approach used by many large European CTA managers; (2) Time Series Momentum (\cite{Moskowitz,Hurst13,Baltas}), a method more commonly used in academic studies, where positions depend on the sign of the momentum; and (3) American trend-following, which originates from the early adopters of CTA strategies.}}).  Then, a risk management process is applied to the signals to ensure that the portfolio is diversified enough and target{\clb s} a constant risk. One simple solution is to impose an equal conditional risk in each asset class, as the SG Trend Indicator does and to size the portfolio to target a constant volatility. Another solution is to use the correlation matrix and apply mathematical optimization to maximize the reward/risk ratio. The well-known Markowitz solution set positions as linearly depending on the signals. The linear dependence is derived simply through the normalization of the signals by the inverse of the correlation matrix between returns, but the {empirical} results are not appealing. The Agnostic Risk Parity ($\ARP$), introduced by \cite{Benichou16}, normalizes the signals through the inverse of the square root of the correlation matrix. \cite{Benichou16} proposed this portfolio because it was rotationally invariant. Since this concept is primarily understood by physicists, their portfolio is not yet widely accepted in the finance community, with only a few citations, despite posing a serious challenge to Markowitz. {For example, \cite{Benveniste} explained, for the celebration of the 50th year of the Journal of Portfolio Management, the myth surrounding the alleged inadaptation of mean–variance optimization to the real world, and how efficient mean-variance optimization works without any conflict with empirical measurements, as long as the correlation matrix is properly modeled, yet neither referenced \cite{Benichou16} nor addressed the impact of uncertainty in mean within Markowitz formula. The literature has long acknowledged the sensitivity of the Markowitz formula to parameter uncertainty, without recognizing that this uncertainty can actually alter the form of the formula itself\footnote{A dedicated review of the literature is provided in the appendix.}.}

In reality, one hidden assumption of Markowitz optimization is that  {the expected means of} returns from the signals are known and certain, but this assumption is incorrect. \cite{Valeyre} proved that the Agnostic Risk Parity ($\ARP$) approach was optimal {in the context of trend-following, where expected means are not certain, are estimated with some errors and do not break rotational invariance as biases introduced by risk premia would}. This optimality holds when the correlation matrix between the trends, signals or {uncertainty of mean} is different from the correlation matrix between returns but is a kind of random matrix where a dominant factor of this correlation matrix is very dominant and diffuses {\clb randomly}. \cite{Valeyre} suggested an interacting agents model to justify such an hypothesis.


There are many different technical indicators used as proxies for expected means in trend-following, but all of them are based on past returns, which are supposed to help predict future ones, while inevitably containing noise and uncertainty. Most of the popular indicators used are based on moving averages of past prices. The most popular is the Simple Moving Average ($\SMA$), while less commonly used types include the Linear Moving Average and the Exponential Moving Average on returns ($\EMA$). Each moving average is computed using an averaging window of a particular size. Trend-following indicators can also be based on a combination of moving averages, such as crossovers (one with a short window size and another with a long window size). The switch is determined when the short-term moving average crosses the long-term moving average. As an example, The Societe Generale Trend Indicator, which is a reference among CTAs, is determined only by the crossover based on a simple average based 20 and 120 business day parameters. Momentum ($\MOM$) is also a very popular indicator and is defined as the difference between two prices. \cite{Hurst} analysed the performance simulated on the last century using a mixture of 1 {\clb month}, 3 month and 1 year momentum. \cite{Lamperiere} analysed the performance simulated on the last 2 centuries using a 5 month $\EMA$ signal\footnote{\clr They were ambiguous and only wrote ``exponential moving average~[\ldots] with a decay rate equal to $n$ months''. The parameter $n$ is then interpreted as the half-life, defined as the lookback period over which the cumulative weight of past observations reaches $50\%$. This corresponds approximately to $\frac{\ln(2)}{\eta}$, where $\eta$ is defined later in Eq. \eqref{eq:ema}.}. {\clr \cite{quantica} specifies that a half-life of 60-70 business days for an $\EMA$, corresponding to approximately one calendar quarter, best replicates the return and risk characteristics of trend-following CTA industry benchmarks such as the SG Trend Index, which invests roughly on the 10 largest systematic CTA funds with trend-following behavior, making it a robust bottom-up representation of that style}. The Bollinger Band ($\BB$) is very popular, but non continuous with only 3 outputs 0 and 1 or -1 when price is outside a band. The Moving Average Convergence/Divergence ($\MACD$) uses a combination of three $\EMA$s to capture mean reversion at both short and long-term scales while identifying trends in the medium term.

We can describe some of these indicators through their sensitivity to past daily returns as introduced by \cite{Zakamuli20}. Fig.\ref{fig:sensitivity} displays some cases. It can be observed that the Exponential Moving Average ($\EMA$) exhibits a sensitivity that declines exponentially, while the crossover resembles a hat shape. Sensitivity is low for very recent past returns, high for a moderately distant past, and negligible for a long past. The shape of the crossover can be adapted if price behavior exhibits mean-reverting tendencies on very short time scales, as seen in the case of individual stocks with low liquidity. From an academic point of view, almost all these trend-following rules are ad hoc and lack theoretical justification for their optimality. \cite{Zakamuli20} related them to each other through the analysis of their sensitivity to past daily returns  \footnote{ {\clrr \clr so a 33  or 36 business day $\EMA$ ({\clr{i.e., $\eta = \frac{1}{33}$ or $\eta = \frac{1}{36}$ in Eq. \eqref{eq:ema}}}) should be optimal for replicating the mixture of 1 month, 3 month, and 1 year momentum signals used by \cite{Hurst}, based on an equivalent average lookback or an equivalent half-life. By contrast, a 43 or 62 business-day $\EMA$ should be optimal for the SG Trend Indicator crossover}}. \cite{Zakamuli20} also reminds us that it is well known that if returns follow an autoregressive process, the best predictor has the same functional form as the autoregressive process being predicted. Therefore, the expected structure of the autoregressive process should provide a good proxy for the best predictor.

\cite{Zakamuli} shows that the indicator with the same sensitivity to past returns as the autoregressive coefficients of the return process is both the best predictor and the optimal indicator, yielding the highest Sharpe ratio when the investor has only two options (Buy or Sell with the same level of risk). He also employs a two-state regime-switching model (bull and bear regimes), a widely accepted framework for stock returns (\cite{Timmermann, Fruhwirth-Schnatter, Giner}), using a semi-Markov model to detect negative autocorrelations over very long time scales. He argues that, within this framework, the MACD indicator with three different time scales could be adapted to construct the optimal portfolio, {but obtaining statistically significant results remains difficult. So mean reversion at larger scales remains {in our opinion} unconvincing despite the fact it was already documented by \cite{Moskowitz}.} However, the two-state regime does not seem natural for modeling trend-following mechanisms, which are better explained by herding behavior rather than by an external guiding force. Models in which agents partially imitate each other are more likely to produce continuous regime changes rather than abrupt shifts between two states. Consequently, a more realistic approach would require a more complex model than a simple two-state regime. A more credible representation would assume that trends follow an Ornstein-Uhlenbeck process or a similar mean-reverting process.

\cite{Grebenkov} proved that, in the more complex yet more insightful case where portfolio positions depend linearly on the signal\footnote{ {\cite{Grebenkov} most likely introduced  into the literature the "linear trading rule" whose position depends linearly on the indicator, rather than a purely binary trading rule, which had been the standard approach most likely because the standard approach focuses on a one-dimensional problem and on a strategy that targets a constant volatility. We were unable to find any prior published articles {\clr except \cite{Brandt}, which is discussed in the appendix}, despite consulting ChatGPT, Google Scholar, and several professors in finance. However, as we are not specialist in this specific area of the literature, it is possible that we may have missed some references.  \cite{Grebenkov} introduced that linear methods in the literature because I was a colleague of both and I used that method to ensure that my statistical arbitrage trading program managing more than 500 dimensions or more than 500 single stocks at John Locke Investments maintained the property of ``rotational invariance.'' A linear trading rule was indeed necessary to have that rotational invariance property and design consistent strategies capable of performing trend-following either on an individual stock, or a basket of stocks, or on the eigenvectors of the correlation matrix of individual stocks, rather than independently applying trend-following to each stock in isolation.    }}, the $\EMA$ with a larger decay parameter than the decay parameter of the autoregressive coefficients of the return process yields the optimal portfolio. This finding is in stark disagreement with \cite{Zakamuli}, simply because \cite{Grebenkov} solved a different, yet superior, optimization problem. Indeed, assuming that positions depend linearly on the signal appears much more natural and is not as suboptimal as merely assuming a binary choice between a long or short position with the same level of risk without the possibility of implementing  ``money management''. Most important the version of \cite{Grebenkov} is in line with {the "rotational invariance" property,} the Markowitz optimization and Agnostic risk parity by \cite{Benichou16} where positions are linearly depending on the signals. {The linear trading rule can be extended from the one-dimensional case to the multi-dimensional setting, allowing the incorporation of the correlation matrix to better adjust portfolio positions in order to target the maximum Sharpe ratio at the portfolio level}.

Moreover, \cite{Grebenkov} is the only one to have derived an elegant\footnote{{The formula proposed by \cite{Acar}, which depends on the correlation between the indicator and returns, leads, when expressed using the Grebenkov process, to an extremely cumbersome formula linking the parameters of the return process and the relaxation time of the $\EMA$, which can hardly be described as objectively elegant (description of the different formulas and the derivations when assuming the process of Grebenkov are available in the Appendix \ref{comparaisonwithortherformulas}). Furthermore, \cite{Zakamuli22} and \cite{Zakamuli} are both based on the formula from \cite{Acar}}.} and rigorous formula for the theoretical Sharpe ratio, explicitly linking it to the parameters of the return process and the relaxation time of the $\EMA$. That formula is key in our  main objective as we want to validate it using empirical measurements.

Two years later, while \cite{Grebenkov} remained largely unrecognized by the academic finance community \footnote{{The formula in \cite{Acar, Zakamuli22,Zakamuli} holds only with binary trading rules (proof in Appendix \ref{proofhold}, originaly derived by \cite{Firoozye}). Following \cite{Grebenkov}, \cite{Ferreira18} published a similar formula for a linear trading rule, but in the context of a more general process with a simple moving average and not an $\EMA$.  \cite{Ferreira18} did not mention \cite{Acar} and derived his results through his own way. Later, after \cite{Grebenkov}, \cite{Firoozye} extended the formula of \cite{Acar} to linear trading rules, yet these developments were ignored by \cite{Zakamuli22} and \cite{Zakamuli}, who continued to focuse exclusively on binary rules while ignoring \cite{Grebenkov}}}, the literature continued to focus on one-dimensional case and the wrong optimization problem—one where positions do not depend linearly on the signals. \cite{Dai16} addressed this issue by imposing a constraint that allowed only two possible positions (long or zero). Similarly, \cite{Nguyen14b} determined the optimal threshold for triggering either a long or short position.

Our first and {main} objective was to validate empirically the model of \cite{Grebenkov} through testing their beautiful formula describing the sensitivity of the empirical Sharpe ratio of the ``Agnostic Risk Portfolio'' ($\ARP$)—one of the components of the optimal trend-following portfolio derived by \cite{Valeyre}—to the parameter of the $\EMA$. {This methodology ultimately extends the idea of \cite{Ferreira18} by focusing on an AR(1) model to capture trends and by aggregating all markets through projection onto the $\ARP$, rather than analyzing each market individually — which carries the risk of excessively noisy measurements and requires fitting complex and distinct ARIMA models}.  

{As a secondary result, we explored } more complex signals than $\EMA$ to confirm that one time scale is enough to describe trends. {\clb These empirical results challenge the common belief in our industry that complex systems are necessary. For example, \cite{Tzotechev} derived an optimization scheme for assigning weights across different time scales using a HRP-Markowitz framework, based on theoretical correlations results derived from an AR(1) process for returns. It also challenges the very controversial \footnote{\cite{Buncic,Elder}} empirical and theoretical findings of \cite{Kelly}, who theoretically argue that simple models severely understate return predictability.}  We decided to limit the other signals to a combination of three $\EMA$s to align with the $\MACD$ which could make sense if the autoregressive structure of returns could be more complex than the one which fits perfectly for a $\EMA$. {However, there are two main differences. First, we aim to set the slope of the sensitivity at lag zero to zero, as our focus is on systematic returns through cross-asset futures, excluding specific risk. A positive slope would be undesirable, since it would imply that recent returns have less influence on the trend than older ones, whereas short-term mean reversion behavior is known to be weaker for indices than for single stocks. Second, we seek to obtain a larger fat tail at longer scales rather than a contrarian contribution, which may be more appropriate for capturing residual risk in single stocks but not for the systematic component of returns.}

{As another secondary result, we explored how} a simple Exponential Moving Average ($\EMA$), which is supposed to be both optimal and simple, could also be replicated by a highly complex but more usual combination of indicators, such as Bollinger Bands, whose elementary indicator is even nonlinear and path-dependent. This second {point} raises the question: why use a complex combination of complex indicators that are sensitive to cherry-picking when a simple and elegant solution already exists?


\begin{figure}[H]
	\centering
	\begin{minipage}{0.45\textwidth}
		\centering
		\includegraphics[width=\textwidth]{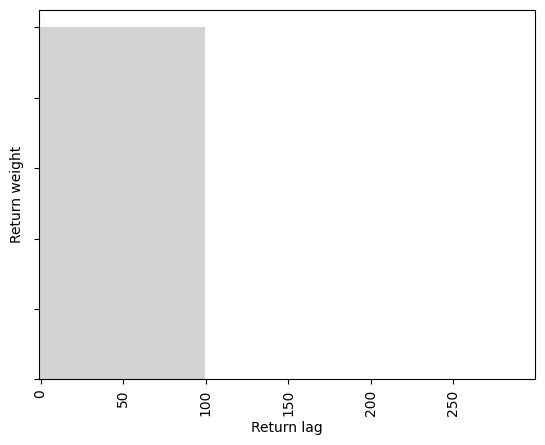}  
		\subcaption{MOM 100 days}
	\end{minipage} \hfill
	\begin{minipage}{0.45\textwidth}
		\centering
		\includegraphics[width=\textwidth]{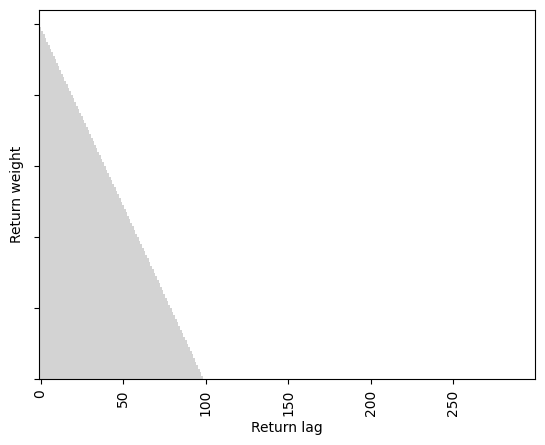}  
		\subcaption{SMA 100 days}
	\end{minipage}
	
	\vskip\baselineskip
	
	\begin{minipage}{0.45\textwidth}
		\centering
		\includegraphics[width=\textwidth]{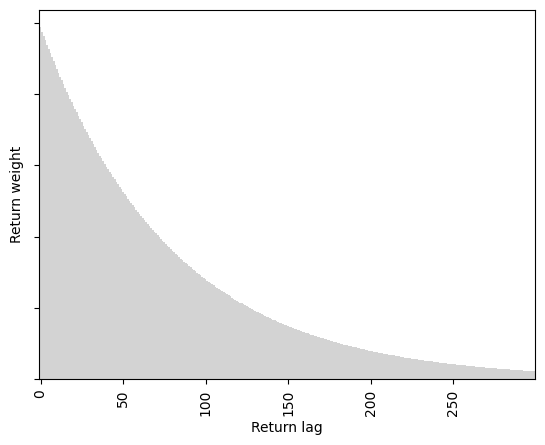}  
		\subcaption{EMA 100 days}
	\end{minipage} \hfill
	\begin{minipage}{0.45\textwidth}
		\centering
		\includegraphics[width=\textwidth]{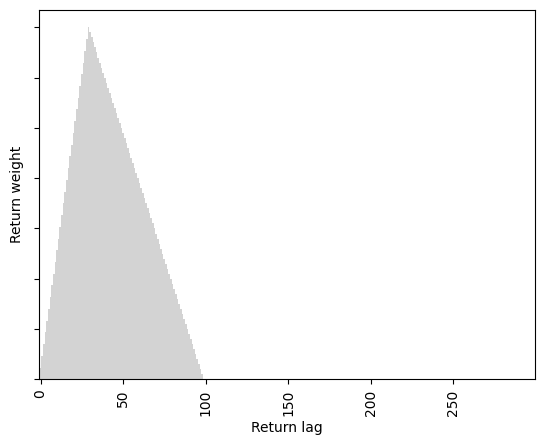}  
		\subcaption{SMA Crossover 30-100 days}
	\end{minipage}
	
	\vskip\baselineskip
	
	\begin{minipage}{0.45\textwidth}
		\centering
		\includegraphics[width=\textwidth]{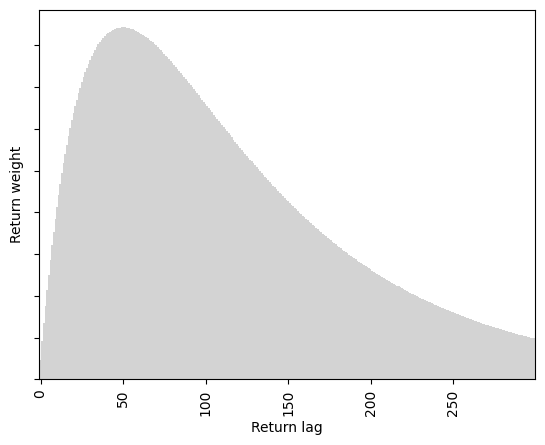}  
		\subcaption{EMA Crossover 30-100 days}
	\end{minipage} \hfill
	\begin{minipage}{0.45\textwidth}
		\centering
		\includegraphics[width=\textwidth]{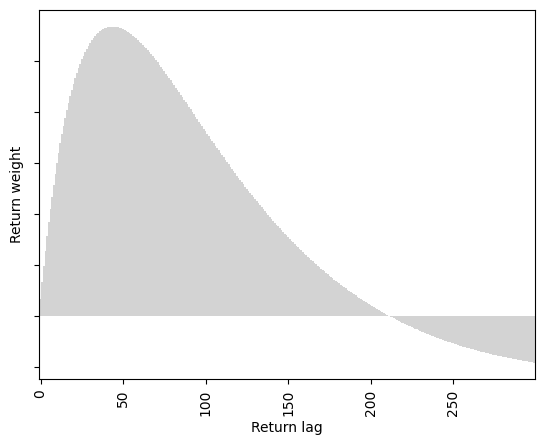}  
		\subcaption{MACD 30-100-400 days}
	\end{minipage}
	
	\caption{``Spectrum'' of different indicators inspired by \cite{Zakamuli} i.e. sensitivity to past daily returns. $\MOM$ is the different between two prices, $\SMA$ is the usual moving average on prices, $\EMA$ is the exponential moving avarage of returns, $\SMA$ crossover is the difference between 2 $\SMA$, $\EMA$ crossover is the difference between 2 $\EMA$. $\MACD$ has 3 time scales and could be adapted according \cite{Zakamuli} which is contrarian on the large scales }
	\label{fig:sensitivity}
\end{figure}

\section{Theory from \cite{Grebenkov} }
\label{theory}

{To derive the theoretical Sharpe ratio for a trend-following strategy, we need a model that describes returns and trends. A classical one-dimensional model used to capture trends assumes that returns consist of two components: a random, independent and identically distributed (i.i.d.) component, and a small bias term. This bias can be modeled using an auto{-}regressive process of order 1, AR(1), which represents mean reversion process around zero with only one time scale. The AR(1) process is characterized by one time scale parameter—indicating the average time needed to revert to zero—and a standard deviation that determines the level of noise. AR(1) is very similar if time lag is infinitesimally small to the Orstein-Uhlenbeck process which is a  mean reversion usual and natural process in complex systems with only one time scale.}

{This classical one-dimensional model can be easily extended to multiple dimensions. In the multivariate case, two covariance matrices are required to account for correlations. The first matrix captures the correlations between the random components of the different underlyings, while the second matrix reflects the correlations among the AR(1) processes associated with each underlying. The model is formally described in Eq. \eqref{eq:autoco}, using the same notation as in \cite{Grebenkov}. In that equation the returns $r_{i,t}$ of the $i$th instrument among the N is described in two components: The first $\epsilon_{i,t}$ is a noise which is not autocorrelated while the second $\beta \sum_{k=1}^{t-1}{ \left(1-\lambda\right)^{t-1-k} \xi_{i,k}}$ is an AR(1) process which is autocorrelated where  $\xi_{i,k}$ is a noise which is not autocorrelated. Ultimately Eq. \eqref{eq:autoco} describes the returns of a diffusive process with a positive autocorrelation model for returns $r_{i,t}$: The weights $\left(1-\lambda\right)^{t-1-k}$ are declining exponentially. The parameter $\lambda$ describes the inverse of the relaxation time of the mean reversion AR(1) process describing the short bias, and $\beta$ represents the strength of that short stochastic bias in the returns. The noises $\xi_{i,k}$ are independent in the $k$  axis but may be correlated with $i$ axis (\cite{Valeyre} introduces specific forms of that matrix, leading to particular optimal portfolios). The noises $\epsilon_{i,t}$ are independent in the $t$ axis but may be correlated with $i$ axis with a very close correlation matrix between returns. Through that model, the trend is $\beta \sum_{k=1}^{t-1}{ \left(1-\lambda\right)^{t-1-k} \xi_{i,k}}$ and it follows an auto{-}regressive model of order 1, AR(1).}

\begin{equation} 
	r_{i,t}=\epsilon_{i,t}+ \beta \sum_{k=1}^{t-1}{ \left(1-\lambda\right)^{t-1-k} \xi_{i,k}}
	\label{eq:autoco}
\end{equation}


 {\clb Note that an alternative model is also used in the literature, for example in \cite{Tzotechev, Sepp,Zakamuli22}, where the returns, rather than the trends, are modeled directly using an AR(1) process or even an autoregressive fractal process.  {\clrr However, this approach has one fewer degree of freedom. The \cite{Grebenkov} model generates very noisy and not significant return Partial Autocorrelation Function, consistent with empirical observations. In the \cite{Grebenkov} model, return autocorrelations can remain very weak or even statistically insignificant despite the presence of trends, whereas in the models of \cite{Tzotechev, Sepp}, return autocorrelations are directly determined by the characteristic time scale. Another weakness of the alternative model is that the conditional expected return  is assumed to be measured with certainty from the $\EMA$ indicator, whereas with \cite{Grebenkov} it can only be measured with uncertainty and we will show later,  this distinction has important implications for portfolio construction. Finally, the model proposed by \cite{Grebenkov} belongs to the classical state-space framework in which the asset price follows a Brownian motion whose drift is an Ornstein--Uhlenbeck process. Such models are widely used in finance, applied mathematics, and Kalman filtering, for example in \cite{Ayed, Harvey, Kim, Rieder, Lakner}. Moreover, the model can be shown to be equivalent to an ARIMA(1,1) process for returns, although the interpretation of the resulting noise term is less straightforward.} }

Through variogram measurements on the Dow Jones Index, {\clrr as autocorrelation analysis was found to be excessively noisy}, \cite{Grebenkov} estimated that $\lambda=0.01$ and $\beta_0=0.1$ (Fig.\ref{fig:variogram}). {\clrr This approach also avoids relying on likelihood-based estimation, whose performance may deteriorate under model misspecification}. The fit appears suspiciously perfect, considering the expectation of multiple time scales with long memory among investors {as well as the potential presence of mean-reversion patterns at longer horizons that are not captured}. Additionally, the result is not entirely convincing, as the 100-year period may be heterogeneous, and the fit is less accurate at shorter time scales, where measurements should, in principle, be less noisy. Another drawback is that the fit may be specific to the Dow Jones Index, the only index with such a long history. To say it in another way, we can suspect the presence of two distinct time scales, as the fit is not perfect for time scales shorter than 100 days. Another possible explanation is that the autocorrelation may not have been consistent over such a long period.

\begin{figure}[H]
	\centering
	\includegraphics[width=0.8\textwidth]{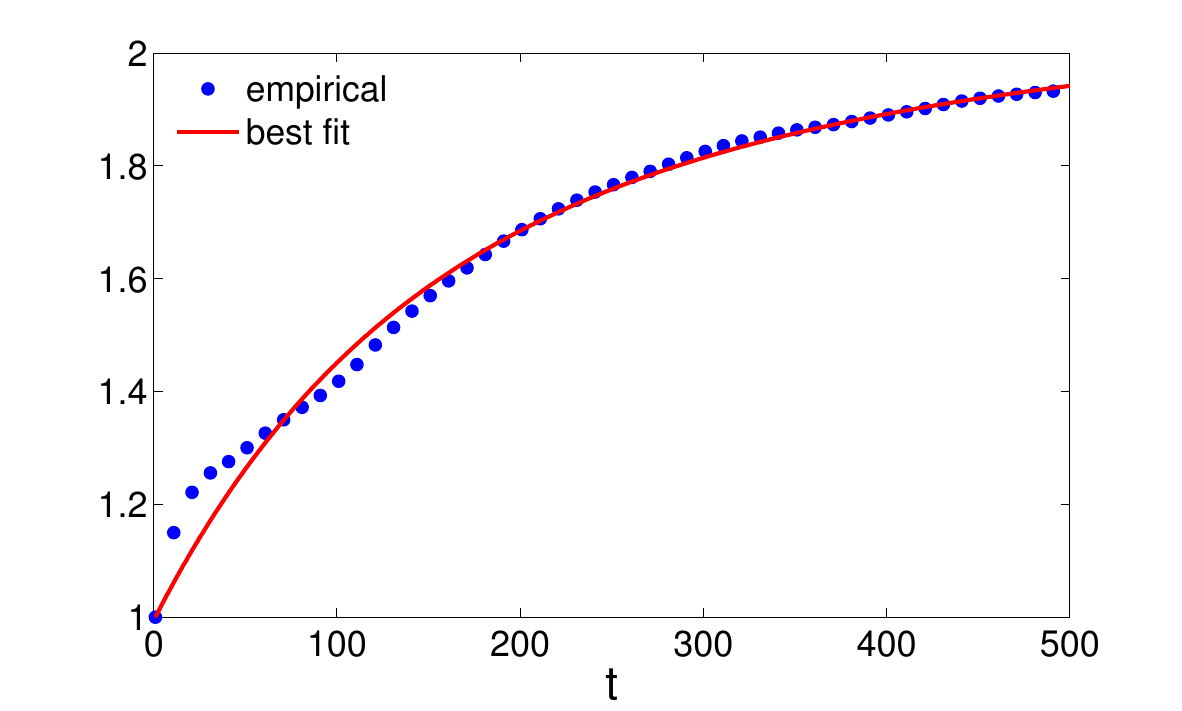}
	\caption{{\clb Variogram  of standardized logarithmic daily returns (normalized by realized volatility) to Dow Jones (1900-2012)} fit with  $\beta_0=0.08$ and $\lambda=0.011$ parameters, replicated from \cite{Grebenkov}  }
	\label{fig:variogram}
\end{figure}

\cite{Grebenkov} solved the correct theoretical optimization problem—maximizing the Sharpe ratio at the portfolio level while ensuring that positions remain linearly dependent on signals when assuming that returns are well described by Eq. \eqref{eq:autoco}. {They considered only the one-dimensional case, but later, \cite{Grebenkov15} and \cite{Valeyre} extended the analysis to the multi-dimensional setting.}

Instead of focusing on the best predictor or on the optimal portfolio with suboptimal constraints—such as the possibility of being either long with the same risk or flat, which is unfortunately common in the literature—\cite{Grebenkov} assumes that  positions should better depend linearly on the signals\footnote{{\clb They considered only the $\EMA$ as the signal, which was a good choice, as \cite{Zakamuli} showed that the best predictor has the same functional form as the autoregressive process being predicted. That is, in the case of Eq. \eqref{eq:autoco}, the optimal predictor should be an $\EMA$, which they indeed used}.}, which is a natural approach which is in agreement with the Markowitz solution. They then determine the optimal signal that maximizes the Sharpe ratio. Thus, \cite{Grebenkov} introduced substantial novelty, even though it is not yet widely considered by most academics specializing in trend-following.

In detail, they derived an explicit formula, in Eq. \eqref{eq:optimal}, the theoretical Sharpe ratio, $\S\left(\eta\right)$, when assuming returns are generated through their autocorrelation model in Eq. \eqref{eq:autoco}, as a function of $\eta$ (the smoothing parameter of the trend indicator $\EMA$ which is defined mathematically later in Eq. \eqref{eq:ema}) , $\lambda$ (the inverse of the time scale of the mean reversion process that describes the trends in Eq. \eqref{eq:autoco}) and $\beta$ (the weight of the trend component in the returns Eq. \eqref{eq:autoco}) or its normalized version $\beta_0$ while also considering the inclusion of trading costs, denoted as $\theta$. Equation Eq. \eqref{eq:optimal} is in fact a remarkably elegant formula. As expected, the theoretical Sharpe ratio depends on $\beta$ (or $\beta_0$ its normalized version), the weight of the trend in the returns, in an approximately proportional manner when $\beta_0^2$ is significantly larger than $\frac{1}{\lambda}$ and when $\eta$ remains close to $\lambda$. In that case, the Sharpe ratio from Eq. \eqref{eq:optimal} can be approximated by $\beta_0 \sqrt{\frac{1}{2}}$ or $\beta_0 \sqrt{\frac{255}{2}}$ in the annualized version of the Sharpe. Fig.\ref{fig:optimal} illustrates the very complex theoretical formula Eq. \eqref{eq:optimal} through a simple graph that helps to understand its sensitivity to the $\EMA$ parameter.

 In reality, Eq. \eqref{eq:optimal} was derived for the one-dimensional case only but it could be easily extended to the multi-dimensional case: {Obviously, \cite{Grebenkov15} proved that Eq. \eqref{eq:optimal} holds when the correlations between $\xi_{i,k}$ and those between $\epsilon_{i,t}$ are set to zero, as shown in their Eq. (24)\footnote{Their Eq. (24) is equivalent to $\S=\sqrt{\frac{N q^2 (1-p^2)}{Q^2 + 2Q + R}}$ where $p = 1 - \eta$, $q = 1 - \lambda$, $Q = \frac{(1 - pq)\sigma^4}{\beta_0^4}$, and $R = 1 - q^2 - 2p^2q^2$}. Eq. \eqref{eq:optimal} also holds when the correlations between $\xi_{i,k}$ are the same as those between $\epsilon_{i,t}$, as shown in their Eq. (39)\footnote{Their Eq. (39) is equivalent to  $\S=\sqrt{\frac{N q^2 (1-p^2)}{Q^2 + 2Q + R}}$ where $p = 1 - \eta$, $q = 1 - \lambda$, $Q = \frac{(1 - pq)\sigma^4}{\beta_0^4}$, and $R = 1 - q^2 - 2p^2q^2$}, with both cases involving a scaling factor that depends only on the number of underlying assets. These latter conditions also make the Markowitz portfolio optimal, as proved by \cite{Valeyre}, but they are not realistic: the resulting “optimal Sharpe” would increase linearly with the square root of the number of underlyings, with the same slope regardless of correlation—an outcome that would be too ideal and profitable to be credible.  This in practice, with large $N$ yields an expected Sharpe ratio based on in-sample data that is much higher—by several orders of magnitude—than the out-of-sample Sharpe ratio. This is the signature of overfitting, which occurs simply because the model assumes that the estimated means are measured without error especially for small eigenvectors, thereby allowing unrealistically profitable hedges to reduce risk under the illusion of certainty about expected mean of returns.
 	
 	Eq. \eqref{eq:optimal} differs slightly when assuming that the correlations between $\xi_{i,k}$ are zero, as shown in their Eq. (43)\footnote{Their Eq. (43) is equivalent to $\S=\sqrt{\frac{N q^2 (1-p^2)}{Q^2 (1 - \bm{\rho_\epsilon}^2) + 2Q(1 - \bm{\rho_\epsilon}) + R}}$ where $p = 1 - \eta$, $q = 1 - \lambda$, and $Q = \frac{(1 - pq)\sigma^4}{\beta_0^4}$}, but this correlation-dependent formulation remains problematic: it fails to converge as $N \to \infty$, which contradicts empirical observations, suggesting that the underlying assumption (that correlations between $\xi_{i,k}$ are zero) is unrealistic.
 
 Nevertheless, we can reasonably expect Eq. \eqref{eq:optimal} to hold more precisely and perhaps even exactly—thanks to rotational invariance \footnote{We know from \cite{Grebenkov15} and the discussion above that Eq. \eqref{eq:optimal} holds when considering coordinate transformation to the eigenvectors of the correlation matrix of $\epsilon_{i,t}$, in which the "random" correlations between $\xi_{i,k}$ and those between $\epsilon_{i,t}$ would be set to zero under the $\ARP$ conditions.}—under the more realistic conditions that make the $\ARP$, and not the Markowitz portfolio, optimal.}

In that case where  Eq. \eqref{eq:optimal} holds, Eq. \eqref{eq:optimalbeta0} must be adjusted to redefine $\beta_0$. To be consistent with the observed increase of the Sharpe ratio with the size of the universe $N$, we can, under strong approximations, expect $\beta_0$ to be related to $\frac{\beta}{\sqrt{\lambda \left(2-\lambda\right)}} \sqrt{\frac{N}{1+\left(N-1\right) \bm{\rho}^2}}$  where $ \bm{\rho}^2$ is the average squared correlation in the universe. This expression extends Eq. \eqref{eq:optimalbeta0} using Eq. (40) \footnote{Their Eq. (40) served only as an inspiration, as it was derived under the assumption of identical correlation coefficients for the two matrices (the correlations between $\xi_{i,k}$ and those between $\epsilon_{i,t}$)—conditions under which the Markowitz portfolio would be optimal, but not the $\ARP$. This inconsistency is offset by another one: the Eq. (40) relates only the scaling factor between the optimal Sharpe and the naïve '1/N' Sharpe. } from \cite{Grebenkov15}. {The rescaling factor formula, $\sqrt{\frac{N}{1 + (N - 1)\bm{\rho}^2}}$, is empirically tested through the relationship between the Sharpe ratio and $N$ in Section \ref{scaling}. This rescaling factor formula also corresponds exactly to the theoretical scaling of the Sharpe ratio for a portfolio of $N$ assets with identical expected returns and volatilities, and an average pairwise correlation of $\bm{\rho}^2$. This makes sense, as the correlation between two trend-following strategies built on underlyings correlated at $\bm{\rho}$ is expected to be $\bm{\rho}^2$. Thus, the scaling factor for $\ARP$ ultimately coincides with that obtained for the naïve '1/N' trend-following portfolio, which does not use the correlation matrix to determine positions.}

{Additionally \cite{Grebenkov} derived  in Eq. \eqref{eq:optimal1},  the optimal smoothing parameter ($\eta_{\text{opt}}$) of the Exponential Moving Average ($\EMA$) applied to returns as a trend indicator for a trend-following strategy which is supposed to yield to the optimal Sharpe ratio}. It is interesting to observe from Eq. \eqref{eq:optimal1} that $\frac{1}{\eta_{\text{opt}}}$, the time scale of the signal $\EMA$ generating the optimal strategy is always shorter than $\frac{1}{\lambda}$, the time scale of the best predictor $\EMA$ (or the optimal indicator in the sub optimal optimization problem of \cite{Zakamuli}) and that of the autocorrelated process of returns. Indeed, when $\beta_0$ increases (either due to a stronger trend or a more diversified universe), the time scale of the optimal $\EMA$ should be reduced. As a conclusion, theoretically the parameters of the signals should be adjusted when the universe is increased.


\begin{eqnarray} 
	\beta=\beta_0 \sqrt{\lambda \left(2-\lambda\right)} 
	\label{eq:optimalbeta0}\\
	\S\left(\eta\right)=\frac{\beta^2_0\sqrt{2\eta}-\frac{2}{\pi}\theta\sqrt{\eta}\left(\lambda+\eta\right)}{\sqrt{\left(\lambda+\eta\right)^2+2\beta^2_0\left(\lambda+\eta\right)}} 
	\label{eq:optimal}	\\
	\eta_{\text{opt}}=\lambda\sqrt{1+2\frac{\beta^2_0}{\lambda}}
	\label{eq:optimal1} 
\end{eqnarray}


\begin{figure} [H]
	\centering
	\includegraphics[width=0.8\textwidth]{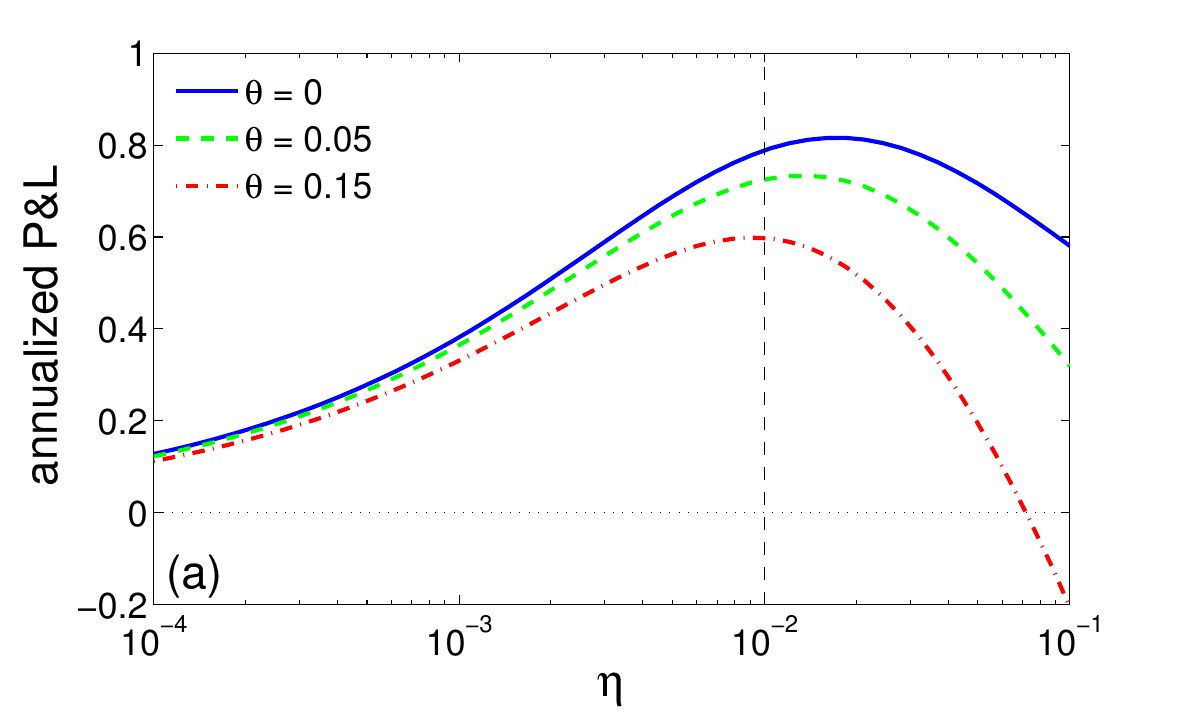}
	\caption{Theoretical Sharpe ($\S\left(\eta\right)$ from Eq. \eqref{eq:optimal}) of the trend-following strategy depending on $\eta$ the parameter of the $\EMA$. $\beta_0=0.1$ and $\lambda=0.01$ for different cost of trading $\theta$, replicated from \cite{Grebenkov} }
	\label{fig:optimal}
\end{figure}


  \section{Description of the empirical analysis}
  
  
{To validate Grebenkov’s theory in Eq. \eqref{eq:optimal}, introduced in Sec.\ref{theory} and, therefore, the diffusive process in Eq. \eqref{eq:autoco} which is assumed for the underlying assets, we conducted empirical backtests of the Agnostic Risk Parity ($\ARP$) strategy introduced by \cite{Benichou16}. {We decided to use the $\ARP$ as the projection portfolio, as it is theoretically and empirically considered to be the optimal trend-following portfolio, capable of minimizing measurement noise in the Sharpe ratio. However, we expect that the results should also hold even with lower Sharpe ratios and more noises for more conventional and naive '1/N' trend-following portfolios, such as those that do not manage correlations or assume an identity correlation matrix except when computing volatility used to target the size of the portfolio. }

	{\clr The other objective advantage of $\ARP$ is that it} constructs a rotationally invariant portfolio {which makes the Sharpe ratio less dependent on the subjective selection of futures used to define the investment universe, as well as on discretionary choices regarding the relative number of FX, stock index, bond, and commodity futures. $\ARP$ also by consequence }applies trend-following to eigenvenvectors while allocating equal unconditional risk to each eigenvector of the covariance matrix. While rotational invariance is a desirable theoretical property, the equal unconditional risk allocation among eigenvectors is a more desired property for asset manager{s}. Moreover, \cite{Valeyre} demonstrated that the $\ARP$ portfolio is optimal—achieving the highest Sharpe ratio—under the assumption that the covariance of the $\xi_{i,k}$ terms in Eq. \eqref{eq:autoco} follows a particular random matrix structure, which is considered realistic. Implementing the $\ARP$ portfolio is relatively straightforward and consists of normalizing the trend indicators by the inverse square root of the correlation matrix, rather than by the full inverse as in the classical Markowitz framework.}
  
 {We decided to focus primarily on exponential moving averages ($\EMA$) as trend indicators to feed the Agnostic Risk Parity ($\ARP$) strategy, as they are theoretically optimal according to \cite{Grebenkov, Zakamuli}, assuming the validity of Eq. \eqref{eq:autoco}. {\clrr The formula in Eq. \eqref{eq:optimal} is morever valid only under the assumption that the $\EMA$ is used as the indicator. There is no tautology. If we replicate Grebenkov’s formula of Eq. \eqref{eq:optimal} using an $\EMA$, this indicates that the Grebenkov model is appropriate. Indeed, we would not replicate Grebenkov’s formula for another process, even when using an $\EMA$ as the indicator}.  Accordingly, we tested various $\EMA$ configurations and measured how the empirical Sharpe ratio of the resulting portfolio depends on the $\EMA$ parameter, in line with Grebenkov’s theoretical predictions. Additionally, we explored more complex trend indicators based on multiple time scales—such as the Moving Average Convergence Divergence ($\MACD$)—to empirically validate whether the $\EMA$ remains the optimal choice.}
  
  {We backtested the Agnostic Risk Parity ($\ARP$) strategy on a global universe including Futures on commodities, FX, stock indices and bonds on a period 25th of {M}ay 1990 and stops on the 7th of {D}ecember 2023. } 
   
   \subsection{ {Renormalized exponential moving average indicator } ($\EMA$)}
   \label{rernormalizedema}
   
  { We used only a normalized version of the $\EMA$ to account for the dependence of volatility on both time and instrument. This normalization allows for meaningful comparisons of trend indicators across instruments, time periods, and even different smoothing parameters. At time $t$, the $\EMA$ is a vector of $N$ exponential moving averages of normalized returns, one for each of the $N$ instruments. The $\EMA$ of instrument $i$ at time $t+1$ is defined by $\phi_{i,t}$ in Eq. \eqref{eq:ema}, where $r_{i,t}$ denotes the logarithm return of instrument $i$ at time $t$. The parameter $\sigma_{i,t}$ represents the standard deviation of returns used to normalize them, and $\eta$ is the smoothing parameter of the $\EMA$ as well as of the return normalization. $\sigma^2$ can be interpreted as an exponential moving average of squared returns, and the indicator $\phi$ corresponds to an exponential moving average of returns scaled by their volatility.} The incrementation  is on {a }daily basis but it could be generalized to minutes returns. 
  
 {\clr	It is interesting to note that the half-life (originally introduced in nuclear physics), defined as the lookback period over which the cumulative weight of past observations reaches $50\%$, corresponds approximately to $\frac{\ln(2)}{\eta}$. That value could be compared to the half of the look-back period of the momentum indicator, which is a more standard and usual indicator in the financial industry. 
 By contrast, the standard number of periods used in technical analysis and in financial markets for the exponential moving average, as defined in \cite{Wilder, Murphy}, is $\frac{2}{\eta}$. 
 This latter quantity corresponds to an average lookback period, but it is not well suited to use a weighted arithmetic average of periods, and may be a weighted harmonic average would be better here. As a result, it has weaker economic interpretation and cannot be meaningfully compared with the parameters of other indicators.
 }
   
   \begin{eqnarray}
   	\sigma^2_{i,t+1}\left(\eta\right) =  \left(  1-\eta \right) \sigma^2_{i,t+1}\left(\eta\right)+ \eta r^2_{i,t+1} \\
   	\phi_{i,t+1}\left(\eta\right)=\left(\ 1-\eta \right) \phi_{i,t}\left(\eta\right)+ \sqrt{\eta} \frac{r_{i,t+1}}{\sigma_{i,t}\left(\eta\right)} 
   	\label{eq:ema}  	
   \end{eqnarray}
   
   The indicator $\EMA\left(150\right)$ is defined by the $\phi_{i,t+1}\left(\eta\right)$ when $\eta=\frac{1}{150}$ is applied to every underlying $i$ and time $t$. We use  $\sqrt{\eta}$ in Eq. \eqref{eq:ema} so that the std of $\phi_{i,t+1}\left(\eta\right)$ is 1 theoretically { if $r$ are well approximated by unautocorrelated returns so that $\phi$ is a normalized $\EMA$ indicator}.
   
   \subsection{{ Moving Average Convergence Divergence indicator ($\MACD$) } as a 3  time scales $\EMA$ indicator}
   
   Inspired by $\MACD$, we introduced a combination of $\EMA$ applying the Eq. \eqref{eq:threeema} while determining $\omega_1$ so that the derivative of the sensitivity to past daily returns  is at 0 at the lag 0 and replicate curves in Fig.\ref{fig:oursensitivity}. {The idea behind this indicator was to reduce the weight of returns at very short time scales—where mean reversion is possible—and to increase the weight of returns at longer time scales, in order to test a multi-timescale memory effect for trend-followers.} 
   \begin{equation}
    \MACD_{i,t}\left(\eta_1,\eta_2,\eta_3,\omega_1,\omega_2,\omega_3\right)=\omega_1 \phi_{i,t} \left(\eta_1\right)+\omega_2 \phi_{i,t} \left(\eta_2\right)+\omega_3 \phi_{i,t} \left(\eta_3\right)
    \label{eq:threeema}
    \end{equation}
    
    The derivative at zeros yields to Eq. \eqref{eq:threeema1}.
     \begin{equation}
    	0=\omega_1 \sqrt{ \eta_1}+\omega_2 \sqrt{ \eta_2}+\omega_3 \sqrt{\eta_3}
    	\label{eq:threeema1}
    \end{equation}

 \subsection{Agnostic Risk Parity ($\ARP$) and naive '1/N' portfolio}   
 
 

 We first estimated the correlation matrix $C$, of dimension $N \times N$, using a 750-day exponential moving average applied to weekly returns when implementing the RIE filter introduced by \cite{Bun16}. The vector $\Sigma$ consists of N values representing the standard deviations estimated using a 40-day exponential moving average applied to daily returns.
 
 Next we expressed the positions vector as a linear function of the signals ($\EMA$ through $\phi$ or $\MACD$) applying a normalisation that involves the inverse of the square root of the estimated correlation matrix and volatilities. This follows the formulation of the agnostic risk parity portfolio (${\ARP}$) introduced by \cite{Benichou16}.
 
 We then used the parameter $\rho = \frac{1}{20}$ for portfolio smoothing as specified in  Eq. \eqref{eq:arp2} which is an very easy solution to reduce trading cost {(usually, trend indicators include rules to reduce trading costs. While the $\EMA$ does not explicitly aim to avoid excessive trading, smoothing the portfolio has proven to be both simple and effective. \cite{Benichou16} recommended optimization under constraints {to limit trading and illiquidity and} to better manage trading costs)}.
 
Finally, we applied a resizing process in Eq. \eqref{eq:arp} to target a constant volatility for the final positions ${\ARP}$ which is a vector of N weights at time $t+1$. {The target-volatility step is both usual and useful to get standardized and homoscedastic returns for the portfolio so that Sharpe ratio is adapted}.

 {We applied exactly the same computation for the naive '1/N' approach, except that we assumed $C^{-0.5}$ in Eq. \eqref{eq:arp1} to be the identity matrix. This is equivalent to managing trend-following strategies independently, underlying by underlying, while constraining the overall portfolio to maintain constant volatility. This naive approach is referred to as ‘1/N’ by \cite{Benichou16}. We did not test the true Markowitz portfolio, which would simply replace $C^{-0.5}$ by $C^{-1}$, as that solution performed significantly worse. We also test the naive '1/N' approach derived from the binary trading rule where we replace $\phi_{t}$ by its sign. }

 \begin{eqnarray}
 	\left\{
 	\begin{array}{l}
 	\hat{\ARP}_{t+1} = \left(  1-\rho \right)   \hat{\ARP}_{t} +\rho \Sigma^{-1}_{t} C^{-0.5} \phi_{t}\left(\eta\right) \\[10pt] or \\[10pt]
 		\label{eq:arp1}
 	\hat{\ARP}_{t+1} = \left(  1-\rho \right)   \hat{\ARP}_{t} +\rho \Sigma^{-1}_{t} C^{-0.5} \MACD_{t}\left(\eta_1,\eta_2,\eta_3,\omega_1,\omega_2,\omega_3\right) \\
 		\label{eq:arp2}
 	\end{array}
 	\right. \\
 	{\ARP}_{t+1}=\frac{\hat{\ARP}_{t+1}}{\sqrt{\hat{\ARP}_{t+1}'{\Sigma_{t}}C{\Sigma_{t}} \hat{\ARP}_{t+1} }} 	
 		\label{eq:arp}
 \end{eqnarray}
 
  \subsection{Dataset and different simulated parameters}   

 The simulation starts on the 2{9}th of {\clb M}ay 1990 and stops on the 7th of {\clb D}ecember 2023. We used daily returns from 70 futures instruments in stock indices, bonds, FX and commodities futures. The description is in the appendix \ref{Data}.
 
 We tested the different indicators applying the $\ARP$ formula Eq. \eqref{eq:arp}. The different parameters are described in Tab.\ref{tab:tab0} in the appendix \ref{Parameters}.

Fig.\ref{fig:oursensitivity} displays  the sensitivities of these indicators to past daily returns. $\MACD$ enables {us} to put more weights on very older returns as we expected. 
 

 \begin{figure}[H]
 	\centering
 	\begin{minipage}{0.45\textwidth}
 		\centering
 		\includegraphics[width=\textwidth]{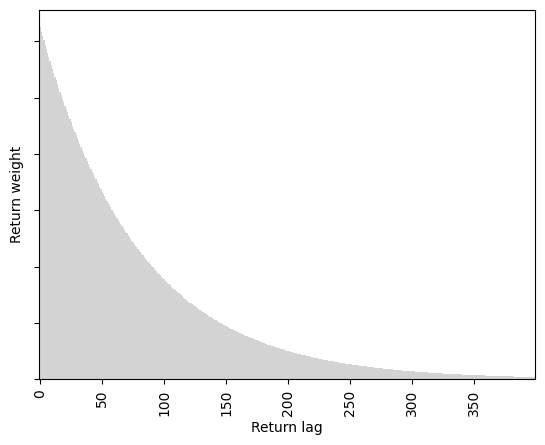}  
 		\subcaption{$\ARP\left(80\right)$}
 	\end{minipage} \hfill
 	\begin{minipage}{0.45\textwidth}
 		\centering
 		\includegraphics[width=\textwidth]{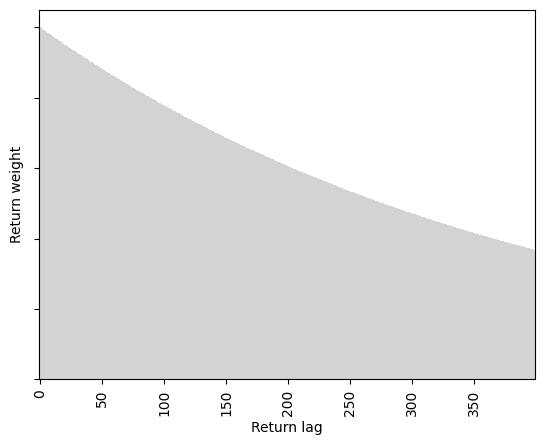}  
 		\subcaption{$\ARP\left(400\right)$}
 	\end{minipage}
 	
 	\vskip\baselineskip
 	
 	\begin{minipage}{0.45\textwidth}
 		\centering
 		\includegraphics[width=\textwidth]{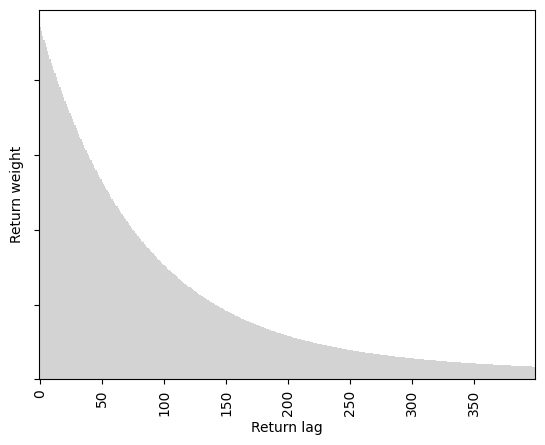}  
 		\subcaption{ $\ARP\left(0\times20, 80,0.2\times400\right)$}
 	\end{minipage} \hfill
 	\begin{minipage}{0.45\textwidth}
 		\centering
 		\includegraphics[width=\textwidth]{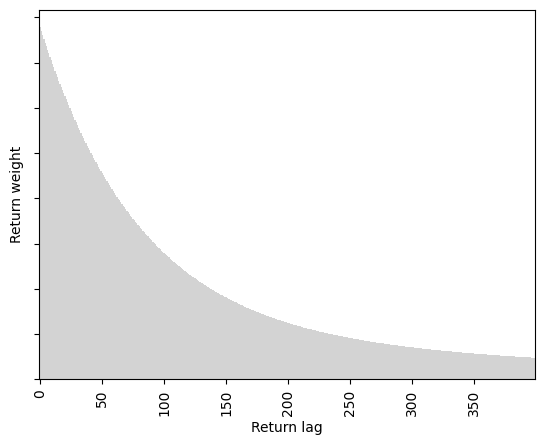}  
 		\subcaption{ $\ARP\left(0\times20, 80,0.4\times400\right)$}
 	\end{minipage}
 	
 	\vskip\baselineskip
 	
 	\begin{minipage}{0.45\textwidth}
 		\centering
 		\includegraphics[width=\textwidth]{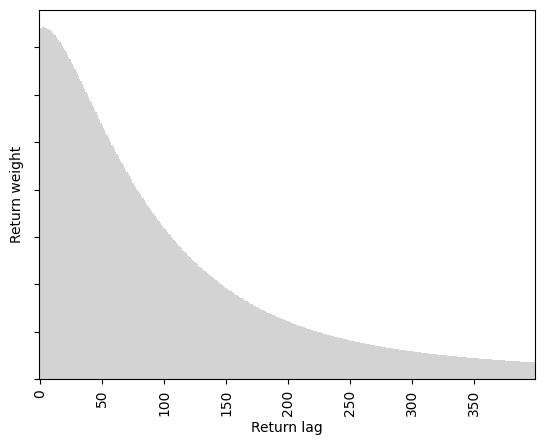}  
 		\subcaption{ $\ARP\left(20, 80,0.2\times400\right)$}
 	\end{minipage} \hfill
 	\begin{minipage}{0.45\textwidth}
 		\centering
 		\includegraphics[width=\textwidth]{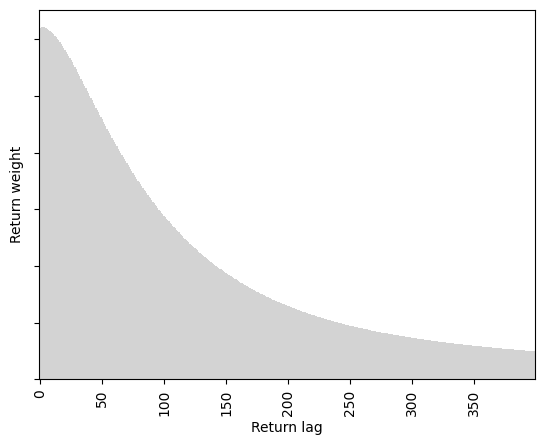}  
 		\subcaption{ $\ARP\left(20, 80,0.4\times400\right)$}
 	\end{minipage}
 	
 	\caption{Different sensitivities to past daily returns for indicators ($\MACD$, $\EMA$) all based on 20, 80 and 400 day time scales. we have $\ARP\left(0\times20, 80,0.2\times400\right)$, $\ARP\left(0\times20, 80,0.4\times400\right)$, $\ARP\left(20, 80,0.2\times400\right)$, $\ARP\left(20, 80,0.4\times400\right)$, $\ARP\left(80\right)$, $\ARP\left(400\right)$  }
 	\label{fig:oursensitivity}
 \end{figure}

 \section{Empirical results}
 
 Here we first interpret the simulated empirical Sharpe ratio when applying the $\ARP$ portfolio with the different indicators. We find interesting results which challenge traditional recipes. {Indeed, and quite surprisingly, the empirical Sharpe ratio closely replicates the theoretical Sharpe ratio predicted by \cite{Grebenkov}, whose main result shows that a single $\EMA$ indicator is already optimal. {\clr Secondly, we validate empirically the rescaling formula}. Thirdly, we attempt to replicate the performance of a simple $\EMA$ using a basket of more complex trend indicators. Finally, we incorporate the non-linear findings of \cite{Schmidhuber}, which could challenge our assumptions, and we analyze the empirical correlations between different strategies.}
 
 \subsection {Grebenkov's model empirical validation }
 
  {In Section \ref{intro} and \ref{theory}, we introduced both the empirical and theoretical Sharpe ratios. As a reminder, the empirical Sharpe ratio is derived from backtesting, whereas the theoretical Sharpe ratio is obtained by modeling the underlying asset using theoretical diffusive processes. We tested Eq. \eqref{eq:optimal}, which describes the theoretical Sharpe ratio as a function of the parameters of the trend indicator, as derived by \cite{Grebenkov}.}
 
 Interestingly, the empirical simulation for $\ARP$ fits pretty well the theorical formula of the Sharpe ratio Eq. \eqref{eq:optimal} derived in \cite{Grebenkov} with the following parameters {$\lambda=\frac{1}{180\pm 17}$} and $\beta_0=0.12$. {Using the \texttt{scipy.optimize} library, we obtained precisely $\lambda = \frac{1}{180.65}$, with an indicative 95\% confidence interval of approximately $\left[\frac{1}{209}, \frac{1}{158}\right]$. This interval is only indicative due to strong nonlinearities and significant autocorrelations in the data. Using bootstrap we obtain a similar interval $\left[\frac{1}{223}, \frac{1}{157}\right]$. We also obtained a $R^2=0.98$ which could be only an indication for formal goodness-of-fit test}.  The Fig.\ref{fig:empirical_theory} is very impressive and the empirical fits should validate Gebenkov's model to describe trends. {\clb The risk of overfitting is very limited}, {and the fit was almost out-of-sample } {\clb as the formula was originally derived in 2014, with its parameters fitted solely on a long historical series of the Dow Jones Index. Using a} {70 times larger}  {\clb dataset  including commodities, forex, bonds, and other stock indices, we obtain parameter estimates that are consistent with those in \cite{Grebenkov}, even if they differ significantly in value. Moreover, the fit appears to be more robust than that obtained using the variogram method (Fig.\ref{fig:variogram}) applied to the Dow Jones Index, as our fitting is performed on a broader set of asset classes over a more recent period.} That is the main result of the paper.
 
 {We can note that the measurements using $\ARP$ fit the theoretical results better than those obtained with the naïve '1/N' portfolio (using the \texttt{scipy.optimize} library, we obtained for the naive '1/N' case $\lambda=1/110$ but $R^2=0.75$), where the correlation matrix is not considered to optimize diversification and reduce noise. The naïve '1/N' measurements are not only shifted downward  as we could have expected (the shift could have been more brutal if the universe was not well diversified to avoid potential criticism), but also appear slightly rotated (with $\lambda$ slightly increased) — a difference that may simply result from noisier estimates, or from the assumption made in Section \ref{theory} regarding the extension of Grebenkov’s formula, using the same $\lambda$, from the one-dimensional to the multi-dimensional case. It may be that Grebenkov’s formula holds more rigorously, from a mathematical standpoint, for the optimal $\ARP$ portfolio than for the naïve '1/N' one. However, this would be quite surprising, since Grebenkov’s formula is derived precisely under the mathematical conditions that make the naïve '1/N' approach optimal (correlations between $\xi_{i,k}$ and those between $\epsilon_{i,t}$ are set to zero), as detailed in Section \ref{theory}. The most likely explanation, which is not very intuitive since the error in the graph is highly autocorrelated with an estimated std of error of 0.18 in line with the 30 years of history, is that the $\ARP$ measurements effectively apply a similar process to the  weighted average of $\lambda$ per market, assigning higher weights to markets which are different from other and which are not simply repetitions of others (as in the stock indices cases where all indices are very close). In contrast, the naïve '1/N' approach uses a similar process to an equal-weight scheme, which can overweight markets like stock indices that repeat themselves, thereby amplifying measurement errors. As a result, the naïve '1/N' portfolio has just overweighted markets which appear faster. \cite{Schmidhuber} similarly aggregates markets to obtain more robust results, observing that these coefficients are universal across asset classes and exhibit a universal scaling behavior, as the trend’s time horizon ranges from a few days to several years. We can also note that the $\ARP$ approach yields higher Sharpe ratios than the naïve '1/N' approach, even when binary trading rules are used, the overperformance would have been a lot larger if the universe was not as diversified.}

 As a first consequence the parameter of $112$ business days ($\eta_{\text{opt}}=\frac{1}{112}$ {\clr or a half life of 78 business days}) for simple $\EMA$ is the optimal parameter to get the optimal Sharpe ratio when not accounting cost of trading which are very small at that trading frequency ({\clb Tab.\ref{tab:tab1} displays holding periods slower than 80 days}). \cite{Lamperiere} found a different result, with a faster optimal parameter at 5 months {\clr as half life for the $\EMA$}, but very similar Sharpe ratios across $\EMA$s ranging from 1 to {\clr10} months (see their Tab.1), showing an almost flat curve with no clear optimum. The difference with our results may be explained by the choice made in \cite{Lamperiere} to determine positions based not on the linear magnitude of the $\EMA$ signals, but solely on their sign. Additionally, they {use the naive '1/N' approach and} did not use the $\ARP$ portfolio construction, which involves inverting the square root of the correlation matrix {which slightly improves the fit, as observed in Fig.\ref{fig:empirical_theory}.} {\clr They also use a monthly rebalancing and a "de-biased" trend. Their Tab.1 results are based on the period 1960-2012. They also use, as indicator, the exponential moving average of price differences (computed at a monthly frequency), divided by the exponential moving average of absolute monthly price changes, which is close but not strictly equivalent to the EMA of normalized daily returns as defined in Eq. \eqref{eq:ema}. } For these different reasons, their Sharpe ratio for the post-2000 period was measured at 0.85, lower than our 1.2, and their optimal $\EMA$ signal corresponded to a very {\clr slightly} faster timescale than ours. {\clrr Note that Table 2 in \cite{Moskowitz} exhibits the same curve shape as ours, with the maximum achieved for a momentum lookback period of approximately 9 to 12 months.}
 
 As a second consequence of the very good fit with the theoretical formula,  modeling trend through mean-reversion process using only one relaxation time and not an multi-time scales one appears to be a good solution as the fit is more than correct. That is particulary unexpected as market is known to have a multi{-}time scales property: For example the relaxation of volatility is known to have mu{\clb l}ti-time scales,   investors are expected to have different horizons of time and different horizons of analysis. As a consequence we can wonder w{\clb h}ether the usual recipe to take into account a multitude of different indicators is justified to claim having the most robust and the most refined signal. {\clb When properly renormalizing the signal, the position, and the portfolio, we do not observe trends reversing over longer horizons (e.g., 12 months) that would indicate mean reversion. Instead, the Grebenkov formula still holds even at horizons of up to 1,000 days. This stands in strong contradiction to the findings of \cite{Moskowitz}, who focused on momentum ($\MOM$) indicators over the period 1965–2009.} {\clrr They maintained their misleading interpretation based on autocorrelation measurements \footnote{For example, in Fig. 1 of \cite{Moskowitz}, most of the *t*-statistics for lags beyond 10 months in the regression of monthly returns on past momentum are close to (-1) in average with values from 2 to -3. However, it is unclear whether the authors accounted for the correlation structure in the data properly (they said "Stacking all futures contracts and dates,we run a pooled panel regression and compute t-statistics that account for group-wise clustering by time"), which could further reduce the statistical significance of these estimates. Moreover, the results differ across asset classes. Finally, there is substantial autocorrelation across the reported estimates because momentum signals computed at different lags are themselves highly correlated. This dependence further weakens the statistical evidence and suggests that many of the reported *t*-statistics may not be statistically significant. The estimate of (b) reported in Fig. 4 of \cite{Schmidhuber} is not inconsistent with the model of Grebenkov, which does not require a mean-reversion term toward a fundamental value. Rather, it simply indicates that momentum signals computed over longer lookback horizons have lower predictive power}, even though they also reported empirical results in Table 2 by measuring the Sharpe ratio of the portfolio as a function of the indicator parameter. The shape of this relationship is consistent with our results and with the \cite{Grebenkov} model.}
 
We can also note that $\beta_0=0.12$ is slightly higher than $0.08$, the parameter measured for the Dow Jones over the past 100 years by \cite{Grebenkov}. {\clr We would have expected $\beta_0$ to be closer $0.4$ instead of $0.12$ when applying the scaling factor from $N=1$ to $N=70$, but the markets seem to have been much more inefficient 100 years ago than over the past 20 years}. The difference between $0.12$ and $0.08$ may seem minor, but since the Sharpe ratio depends on $\beta_0^2$, it results in a Sharpe ratio that is 2.25 times higher when applying the strategy to a universe of 70 underlying assets instead of the Dow Jones (assuming the trend strength over the past 30 years was similar to that of the past 100 years). This further confirms the importance of measuring the implied autocorrelation parameter based on a strategy invested in a large universe but a more recent period, as it leads to more accurate and agregated estimates.

We can also note that $\lambda$ is estimated to $\frac{1}{180\pm 17}$ instead of $0.011$ in the case of the Dow Jones. Our analysis is that {\clr stock market index was measured faster due to noise but also that} market behavior should have changed in the last 100 years and we believe that our fit appears more robust than a simple variogram.

We can see based on the Tab.\ref{tab:tab1}, in appendix \ref{results}, that empirical Sharpe is $1.24$ for $\ARP\left(120\right)$ with one time scale and $1.18$ for $\MACD\left(20,120,0.4\times 400\right)$ with 3 time scales.  $\MACD$ could not be justified as additional time scales do not bring significant improvement. Also Sharpe ratio is not so  sensitive to the parameters around the optimal as expected: Sharpe is $1.25$ for  $\ARP\left(100\right)$ and $1.21$ for $\ARP\left(150\right)$. 
 
 

 \begin{figure}[H]
 	\centering
 	\includegraphics[width=0.8\textwidth]{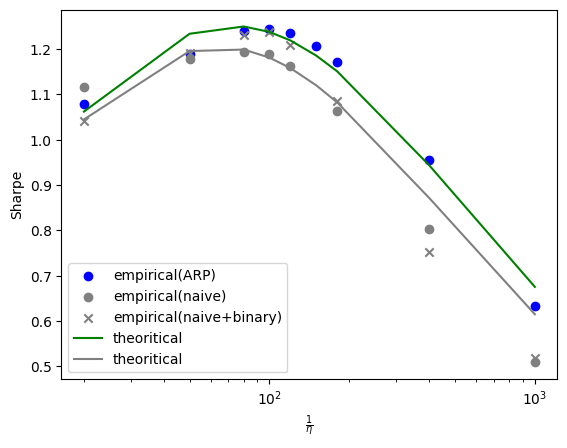}
 	\caption{Empirical Sharpe ratio based on the whole period 1990-2023 for trend-following strategies using $\ARP$ (Eq. \eqref{eq:arp} ) or the naive '1/N' approach applied to $\EMA$ as signal (both with the linear trading rule and the binary trading rules)  v.s. $\eta$ the parameter of the $\EMA$ (Eq. \eqref{eq:ema} ) and the theoretical equation Eq. \eqref{eq:optimal}. The theoretical model is fitted with parameters $\beta_0=0.12$ and $\lambda=1/180$ so that $\eta_{\text{opt}}=\frac{1}{112}$ . Empirical measurements for the $\ARP$ are displayed in the Tab.\ref{tab:tab1}.}
 	\label{fig:empirical_theory}
 \end{figure}

 \subsection {Scaling factor empirical validation }
 \label{scaling}

  {We can see that the empirical measurements in Fig.\ref{fig:empirical_theory} appear less noisy than those reported by \cite{Ferreira18} for the Dow Jones, where a complex ARMA process was required instead of a simple AR(1). It is clear that aggregating data by projecting onto our optimal trend-following portfolio, rather than analyzing each market individually, is highly effective in reducing noise that could otherwise create misleading evidence of mean reversion at longer time scales, as observed in \cite{Moskowitz}.
  	
  	We applied a simple empirical protocol to illustrate why analyzing the Sharpe ratio as a function of model parameters at the portfolio level is far more efficient than doing so market by market, as in \cite{Ferreira18} (see their Fig.6), which was affected by considerable noise. Another advantage of this protocol is that it provides an empirical test of the scaling factor suggested in Section \ref{theory}, where the Sharpe ratio depends on the size of the universe as $\sqrt{\frac{N}{1+\left(N-1\right)\rho^2}}$.
  	
  	We measured the empirical Sharpe ratios for the $\ARP$ using an $\EMA$ parameter value of $\frac{1}{120}$—close to the theoretical optimum—across several random universes containing 1, 3, 6, 9, 15, 20, and 27 underlying assets, each randomly selected from the original universe of 70 assets. We conducted 20 random trials for each case.
  	
  	Fig.\ref{fig:empirical_theory_scaling} shows another impressive fit, although the fitted value of $\bm{\rho}^2$ ($0.024\pm 0.012$) appears slightly lower compared with the empirical average of squared correlations ($0.056$) based on the weekly returns. We can expect through that fit a Sharpe ratio of 1.28 for the whole universe of size $N=70$, a Sharpe ratio of 1.40 for a $N=140$ size and a Sharpe ratio of 1.60 for an infinite universe. We can also observe that the fit obtained with the simpler square-root formula is less accurate and would predict larger Sharpe ratios for larger universes.
  	
  	The same Fig.\ref{fig:empirical_theory_scaling} also shows that the Sharpe ratio increases with the size of the universe, while its standard deviation, theoretically expected to remain stable around 0.2, declines only slightly due to the larger correlations between  randomly generated universes when they are larger. We can see that the mean Sharpe ratio is only about 0.2 when $N=1$, the one dimensional case, which is of the same order of magnitude as the noise level. This highlights how diversification and data aggregation—through projection onto the $\ARP$—help reduce the noise-to-signal ratio of the measurements when $N$ is large.  We can also see that the exact standard deviation at $N=1$ of 0.2087 is not significantly larger than the standard deviation of 0.175, just explained by the noise of Sharpe ratio using 32.58 years of data, that we would also obtain with an infinite number of random trials instead of 20, assuming that the autocorrelation coefficients were universal across asset classes. This last observation, which does not conflict with the assumption of universal autocorrelation parameters across different sectors, justifies our choice to aggregate data across asset classes.}
 
 \begin{figure}[H]
 	\centering
 	\includegraphics[width=0.8\textwidth]{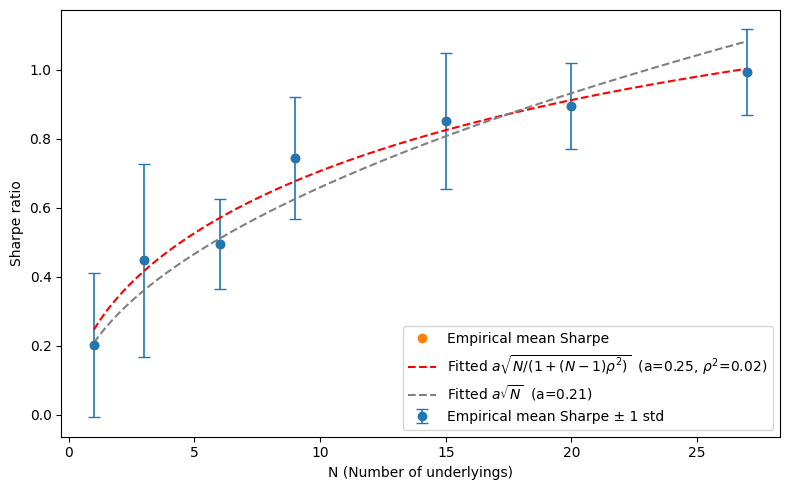}
 	\caption{Empirical Sharpe ratio based on the whole period 1990-2023 for trend-following strategies using $\ARP$ (Eq. \eqref{eq:arp} ) applied to $\EMA$ with $\eta=1/120$ as signal  v.s. $N$ the size of the universe and the theoretical scaling factor inspired from Eq. (40) from \cite{Grebenkov15}. The fitted $\bm{\rho}^2$  at 0.024 is just lower than the empirical average of squared correlations. }
 	\label{fig:empirical_theory_scaling}
 \end{figure}

 \subsection{Replication of a simple $\EMA$  by a mixture of {B}ollinger {B}ands $\BB$ }
The Bollinger Bands indicator, $\BB$, {\clr which is most likely the most popular technical indicator in the financial industry,} applies a double heavy-side function to an $\SMA$ with a width $\delta$. This indicator is nonlinear and therefore complex, with sensitivities to past returns that depend on the price path. However, the usual approach is to use a signal composed of a mixture of many Bollinger Bands indicators with different parameters, making the signal less path-dependent and increasingly robust. Here, we aim to demonstrate that a simple $\EMA$ signal can be replicated through a complex mixture of a large number of  $\SMA$, which can, in turn, be decomposed into a large number of $\BB$ Bollinger Bands (Eq. \eqref{eq:bb} ). This explains why it is common, as shown in Fig.\ref{fig:BB} when replication the optimal $\EMA$ with 112 days, to display indicator weights following a bell-shaped curve centered around 200 days, while explaining to investors that the signal contains both short-term and long-term indicators.
\begin{small}
	\begin{eqnarray}
		\EMA_{t}\left(\eta\right) =\frac{r_t + \left(1-\eta\right) r_{t-1}+\left(1-\eta\right)^2 r_{t-2}+....+\left(1-\eta\right)^n r_{t-n}+....}{1 + \left(1-\eta\right) +\left(1-\eta\right)^2 +....+\left(1-\eta\right)^n +...} \\
		\EMA_{t}\left(\eta\right) =\frac{...+\left[\left(1-\eta\right)^{n-1}\eta\right] \left(n-1\right) \SMA_{t}\left(n-1\right) +...+\left[\left(1-\eta\right)\eta\right] \SMA_{t}\left(1\right) }{1 + \left(1-\eta\right) +\left(1-\eta\right)^2 +....+\left(1-\eta\right)^n +...} \label{eq:bb0}\\
		\SMA_{t}\left(n\right)=\int_{0}^{\infty}  \BB_{t}\left(n,\delta\right) \, d\delta
		\label{eq:bb}		
	\end{eqnarray}
\end{small}

\begin{figure}[H]
	\centering
	\includegraphics[width=0.8\textwidth]{ 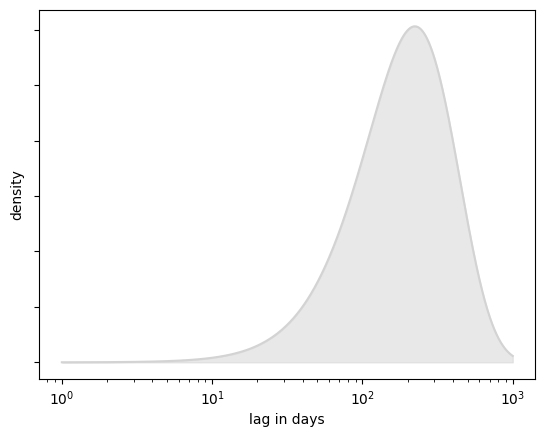}
	\caption{Weights in the mixture of $\BB$  replicating the $\EMA\left(\frac{1}{112}\right)$ Vs size of the window of elementary $\BB$ (size or lag in logarithm) . The weights are derived from Eq. \eqref{eq:bb0} and pick the width parameter of $\BB$ uniformely.  }
	\label{fig:BB}
\end{figure}

 \subsection{Non-linear findings which challenge the $\EMA$ }

 {\cite{Schmidhuber} provides empirical evidence across different time scales suggesting that when trends are extreme, the optimal indicator becomes negatively dependent on the trend itself. As a consequence, strong trends appear to be relatively rare. Specifically, for a normalized trend variable $\phi$, they find that the optimal signal depends on a nonlinear transformation of the form $\phi - c\phi^3$, with $c = 0.33$. They regress daily normalized returns on both the normalized trend and its cubic term as explanatory variables, across various underlyings and asset classes, including commodities.}

 {While they use a different trend indicator than $\phi$ derived in Eq. \eqref{eq:ema}, our version of the renormalized $\EMA$, their results challenge the assumption of a purely gaussian diffusive return process from Eq. \eqref{eq:autoco}, which underpins the theoretical optimality of $\EMA$-based strategies ({i.e., we would obtain $\beta$ from Eq. \eqref{eq:autoco} not as a constant, but as a path-dependent quantity that depends on $\sum_{k=1}^{t-1} (1-\lambda)^{t-1-k} \xi_{i,k}$}). Interestingly, when applying the Agnostic Risk Parity ($\ARP$) strategy using $\phi - c\phi^3$ instead of $\phi$ as the trend signal, we observe no improvement in the Sharpe ratio\footnote{{\clrr ARP with $\eta=1/120$ will generate a Sharpe ratio of $1.24$ for $c=0$, $1.24$ for $c=0.3$ and $1.14$ for $c=0.4$. ARP will generate a Sharpe ratio of $1.24$ for $\gamma=0$, $1.28$ for $\gamma=1$ (Taylor expansion of $\tanh$ would give $\gamma=1$ the closer to $c=0.33$) and $1.29$ for $\gamma=1.3$  where $\gamma$ in the non linear parameter inside the $\tanh$ as in \cite{Kurth2}. }}. This discrepancy might be due to the lack of statistical significance of the coefficient $c$ (the sensitivity to the cubic term), which is highly sensitive to a few extreme trend observations. {\clrr However, this explanation is unlikely, as similar findings were originally reported by \cite{Ferson,Lamperiere} and subsequently confirmed by \cite{Bouchaud2017,Kurth2,Moskowitz2}\footnote{It is conceivable that the estimates of the nonlinearity parameter $\gamma$ (initially estimated from 2 to 7 depending on the futures as documented in the Tab. 1 of \cite{Kurth2}) and the mean-reversion coefficient f reported by \cite{Kurth2} would be reduced if the momentum dynamics were extended to include an idiosyncratic noise term, i.e., replacing $dM_t=-\alpha M_t dt+\alpha\left(dP_t-g_t dt\right)$ in $dP_t=f\left(V_t-P_t\right)dt+\beta \tanh\left(\gamma M_t\right) dt+g_t dt+\sigma_N dW_t^N$ and $dV_t=g_tdt+\sigma_V dW_t^V$   by   $dM_t=-\alpha M_t dt+h\alpha\left(dP_t-g_t dt\right)+\sigma_MdW_t^M$, thereby bringing the specification closer to the latent-state formulation of \cite{Grebenkov15}. Such an extension would allow part of the variability currently attributed to the deterministic response to price changes to be absorbed by an independent stochastic component.}. It may instead result from the loss of the rotational invariance property of the $\ARP$ : applying the nonlinear transformation $\phi - c\phi^3$
 		either to the raw returns or to their projections onto the eigenvectors of the covariance matrix no longer yields equivalent results because of the nonlinearity of the transformation. Under this interpretation, the nonlinear transformations proposed by \cite{Ferson,Moskowitz2} would simply modify the portfolio weights, as the $\ARP$ already does (large trends would be reduced by ARP because they are most of the time driven by a common factor), thereby improving the Sharpe ratio. Applying both corrections simultaneously, however, would be redundant and therefore unlikely to provide additional benefits. Another possible explanation is the difference in the normalization used to measure the relationship between future returns and past trends. We normalize in the $\EMA$ signal using a short-term volatility estimate to correct for the heteroskedasticity of returns and to better approximate the Gaussian case, whereas \cite{Schmidhuber22,Bouchaud2017} normalize returns using the standard deviation computed over the entire sample. That last explanation is strongly supported by \cite{Lamperiere}, who also take heteroskedasticity into account in their normalization and report much weaker nonlinearities in their Fig. 5. Finally, the discrepancy may arise from measurement errors in the estimation of the unobserved trend, induced by the term $\sum_{k=1}^{t-1} (1-\lambda)^{t-1-k} \xi_{i,k}$ in Eq. (\ref{eq:autoco}), which may distort the apparent relationship when returns are non-Gaussian.}

 {\clr One interesting theoretical justification for this introduced nonlinearity is to keep financial markets close to efficiency despite herding behavior. The classical and earlier approach was to complexify Eq. \eqref{eq:autoco} through introducing} mean-reverting behavior from fundamentalist traders and assuming an attraction toward a fair or efficient price though such a price is unobservable ({\clrr \cite{Majewski} implement the two effects to evaluate their impact which generates  non-monotonic relation between past trends  and future returns}). It can be approximated by a moving average, which effectively induces negative autocorrelation, especially over longer time scales. \cite{Zakamuli} explored this idea, though obtaining statistically significant results remains difficult. By contrast, \cite{Schmidhuber}'s approach involves making autoregressive coefficients trend-dependent, enabling a dynamic transition between trend-following governed by trend-followers and mean-reversion regimes governed by fundamentalist traders. Another way to model the influence of fundamental investors could be simply to assume their impact manifests in changes to the trend, i.e., on the noise $\xi_{i,k}$  in Eq. \eqref{eq:autoco}. It could be enough to make sure that prices are not totally disconnected from fair and efficient ones which would differ in that case from moving average of prices.

 {The findings of \cite{Schmidhuber} should not be ignored. A deeper investigation is, however, warranted to better understand their implications for the construction of optimal trend-following portfolios, although this lies beyond the scope of the present simple work and should be the subject of a dedicated study.}

\subsection {Correlations between indicators }

Fig.\ref{fig:heat_map} in Appendix \ref{results} shows that the indicators with different parameters give very correlated strategies  which confims {\clb that the traditional approach of using a basket of many indicators as signal is not so appealing}. $\ARP\left(80\right)$ is correlated to $\ARP\left(150\right)$ with a coefficient of $0.96$. That is nevertheless pretty interesting to see {\clb such} strong correlations which could be explained by a common factor somewhere, which could be the object of an additional research. Morevever the $\ARP\left( 120\right)$ based on a simple $\EMA$ is very close with correlation from 1 to 0.99 to  the new indicator $\MACD\left(20,120,0\times 400\right)$ or $\MACD\left(20,120,0.4\times 400\right)$  we introduced based on 3 different time scales to increase weights for very old past returns and decrease weights on very recent ones. So it makes this {\clb refinement} not justified again.

\section {Conclusion}

Grebenkov's model for describing trends is empirically validated, as its rather complicated theoretical formula for determining the Sharpe ratio based on the 
$\EMA$ parameter fits impressively well with empirical data. The best fit is obtained using the theoretical model parameters 
$\lambda=\frac{1}{180 \pm 17}$ and $\beta_0=0.12$. As a consequence{,} the parameter of $112 \pm 10$ business days ({\clr equivalent to a half-life period of 78 business days}) for simple $\EMA$ is the optimal parameter to get the optimal Sharpe ratio. It is quite surprising that a single $\EMA$ is optimal for capturing trends, as one would expect different time scales for different types of investors {\clrr or long-term mean reversion at longer time scales, as documented by \cite{Moskowitz}}. However, there are likely much shorter time scales, on the order of a few days, {\clrr as documented, for example, by \cite{Kurth}}, but they have no significant impact on a medium-frequency strategy. The conclusion is that using a complex mixture of sophisticated indicators is unnecessary when the $\EMA$ alone provides a perfect fit—proving that simplicity can indeed be beautiful.

\section {Declaration of funding}

No funding was received



\appendix

\section{Review of Literature of Solving Markowitz Optimization with Uncertainty of the Parameters}

The sensitivity of the classical \cite{Markowitz52} mean–variance optimization framework to estimation errors in expected returns and covariances has long been recognized. When means and covariances are replaced by their empirical estimates, the resulting optimal portfolios often exhibit extreme positions and poor out-of-sample performance. This issue has motivated a wide array of research aiming to account for parameter uncertainty in portfolio selection.

\subsection*{Bayesian and Empirical–Bayes Approaches}

The Bayesian framework addresses parameter uncertainty by placing prior distributions on the mean vector and, sometimes, the covariance matrix. \cite{Jorion86} introduced a Bayes–Stein shrinkage estimator in which sample means are shrunk toward a common value, yielding more stable estimates of expected returns. Later, \cite{BlackLitterman92} proposed a practical Bayesian model combining equilibrium-implied returns with subjective investor views, producing posterior expected returns that are robust to estimation noise. \cite{KanZhou07} formalized the incorporation of estimation risk directly into the optimization problem, deriving analytical adjustments to account for parameter uncertainty. However, these Bayesian and empirical–Bayes methods only use uncertainty information to update or regularize the input parameters of the Markowitz optimization; the structure of the optimal solution itself, $w^\star \Sigma \propto   C^{-1}\mu/ \Sigma$, remains unchanged.

\subsection*{Covariance Shrinkage and Regularization Approaches}

A complementary literature focuses on improving the estimation of the covariance matrix. \cite{LedoitWolf03,LedoitWolf04} developed optimal shrinkage estimators combining the sample covariance matrix with structured targets, leading to more stable and better-conditioned inputs for portfolio optimization. \cite{JagannathanMa03} showed that imposing simple portfolio constraints (such as nonnegativity) implicitly regularizes the covariance matrix, mitigating estimation error effects. These approaches improve empirical robustness but, again, modify only the inputs of the optimization, not the analytical form of the optimal portfolio weights. Related structured-ensemble, predict-then-optimize methods steer diversification \emph{ex ante} by tuning diversity in prediction and selection, using the ensemble’s combiner (equal-weight under MSE) as the target, while leaving the Markowitz closed form intact \cite{RODRIGUEZDOMINGUEZ2025128633}.

\subsection*{Robust Optimization and Worst-Case Formulations}

Another important strand of research uses robust optimization to address parameter uncertainty. \cite{Goldfarb03} and \cite{ElGhaoui03} recast the mean–variance problem as a convex program that optimizes for the worst-case scenario within an uncertainty set for the parameters. Robust optimization offers clear protection against misspecification of $\mu$ and $C$, but the resulting portfolios correspond to a modified optimization problem rather than a redefinition of the Markowitz formula itself. The uncertainty is modeled exogenously, not endogenously reflected in the structure of the optimal solution. 

The robust mean variance objective function is then typically
\[
\min_{w} \;
w^{\top} (C ) w
- \lambda\, \hat{\mu}^{\top} w
+ \lambda \rho\, \| Q^{1/2} w \|_2
\]
It can be decomposed into three distinct components, each corresponding to a different source of risk or model uncertainty:

\begin{enumerate}
	\item \textbf{Covariance risk term:} 
	The quadratic form $w^{\top} (C ) w$ represents the classical portfolio variance.

	\item \textbf{Expected return term:}
	The linear component $-\lambda\, \hat{\mu}^{\top} w$ corresponds to the standard mean--variance trade-off.
	The coefficient $\lambda$ controls the investor’s risk aversion and governs the balance between expected return and variance minimization.
	
	\item \textbf{Uncertainty penalization term:}
	The additional penalty $\lambda \rho\, \| Q^{1/2} w \|_2$ arises directly from the ellipsoidal uncertainty set on the mean vector. So $ \lambda\, \hat{\mu}^{\top} w
	- \lambda \rho\, \| Q^{1/2} w \|_2$ is the worst-case scenario when the $\mu$ are set in a conservative way.
	Intuitively, it reduces exposure to directions in which the mean estimates are most uncertain (as encoded by the covariance matrix $Q$ of estimation errors).
	The scalar $\rho$ acts as a robustness parameter: larger values of $\rho$ correspond to more conservative portfolios that perform well even under unfavorable perturbations of $\hat{\mu}$.
\end{enumerate}

In summary, the first term controls sensitivity to risk, the second captures expected performance under the nominal model, and the third explicitly regularizes against uncertainty in the mean estimates.
Together, these terms yield a portfolio that trades off performance and robustness in a principled convex optimization framework.{ \clr \cite{Segonne} also uses $Q$ in a different way and introduces an anisotropy penalty into the mean-variance objective function to improve robustness. The term $\lambda \rho \,\| Q^{1/2} w \|_2$ is transformed into an anisotropy component through his Eq. (48), which also helps interpret the solution of \cite{Benichou16} as a robust one}. To our knowledge, this "Robust Optimization and Worst-Case Formulations"  is the only approach in the literature that explicitly accounts for the covariance matrix  $Q$  of estimation errors, although the results differ from those of \cite{Benichou16} and \cite{Valeyre}.

\subsection*{Towards Uncertainty-Dependent Optimal Formulas}

While the above approaches (except for the robust approach, which, however, neither relates directly to the optimal solution knowing uncertainties) treat uncertainty through priors, constraints, or uncertainty sets, they do not alter the analytical dependence of the optimal portfolio on $\mu$ and $C$. 

{\clr \cite{Brandt} originally introduced the idea of optimising a portfolio under the assumption that portfolio weights are proportional to time-dependent signals $x_t$ (not only trend-following signals). In their framework, this is equivalent to determining an optimal matrix $A$ such that, in their notation, $w_t \propto A x_t$ yields the optimal "dynamic" portfolio. They derived a very complex expression for $A$, although it nonetheless resembles the structure of the Markowitz solution.

\cite{Grebenkov15} also optimises the portfolio by determining a matrix $A$ under the same proportionality assumption, but applies it specifically to the Grebenkov diffusive model and to exponential moving-average signals. Their results were derived independently from those of \cite{Brandt}. Under suitable approximations, \cite{Valeyre} showed that this matrix can be written as $A=C^{-1} C_{\xi} C^{-1}$, where $C_{\xi}$ is the covariance matrix of the estimation uncertainties $\xi$ of Eq. \eqref{eq:autoco}. By assuming a random diffusive model for $C_{\xi}$, \cite{Valeyre} obtained the $\ARP$ portfolio, for which $A=C^{-1} C_{\xi} C^{-1}$ reduces to $A=C^{-1/2}$.

Subsequently, \cite{Firoozye23, Kelly22} proposed simplified approximations from results of \cite{Brandt} leading to an easy-to-use "generalised" Markowitz formula that properly incorporates $Q$ through their Eq. (6), namely $A=C^{-1} C_{rs} Q^{-1}$ (with our notations), instead of the $A=C^{-1}$ appearing in the classical Markowitz formula. Here, $C_{rs}$ denotes the "cross" variance matrix between returns and signals.

The expressions $A=C^{-1} C_{rs} Q^{-1}$ and $A=C^{-1} C_{\xi} C^{-1}$ become equivalent—or at least very close—under appropriate approximations when the signals are generated both by the Grebenkov diffusive process and by exponential moving averages. One may therefore say that \cite{Firoozye23, Kelly22} generalised the Markowitz formula, although this contribution has remained relatively unnoticed. Their work also generalises \cite{Grebenkov15}, as it is not restricted to trend-following signals nor to the Grebenkov diffusion framework.

All these unifying relationships between the different approaches of \cite{Grebenkov15, Kelly22, Firoozye23, Valeyre, Benichou16} are clearly explained in \cite{Segonne}. Ultimately, it appears critical to model $C_{\xi}$ (or both $C_{rs}$ and $Q$) appropriately, since these matrices are largely unknown and cannot be measured with precision without causing overfitting to the portfolio. Modelling them through a random diffusive process naturally leads to the solution of \cite{Benichou16}, ensuring that the resulting optimal portfolio is both robust and genuinely out-of-sample optimal. Empirical backtests, in the context of trend-following, confirm that this approach delivers strong out-of-sample performance.

\cite{Benichou16} and \cite{Valeyre} therefore propose frameworks in which the optimal portfolio explicitly incorporates estimation uncertainty into its structure. These works derive solutions that differ fundamentally from the classical Markowitz prescription. In particular, under specific random matrix assumptions on the structure of the estimation-error correlation matrix, the optimal weights become proportional to $A=C^{-1/2}$ rather than $A=C^{-1}$, thus providing a substantial conceptual revision of the Markowitz approach. This line of research, where random matrix theory is obviously something crucial, remains rare in the literature but represents a promising direction for embedding measurement uncertainty at the core of portfolio theory.}

\section{Comparaison of Grebenkov's Formula with the Literature's ones}
\label{comparaisonwithortherformulas}

\subsection{	\cite{Acar}}

			\cite{Acar} uses a binary "Long/Short" strategy (Eq. \eqref{eq:H} ) in Section 8.5.1, "Optimal long/short fund on the index itself $X$" using his notations, where $F$ is the forecast used to predict the index $X$ and $H$ is the binary strategy. \cite{Acar} presents his equation (Eq. \eqref{eq:SharpeAcar} ) in Section 8.1, based on \cite{Acar98}.

\begin{equation}
	H =
	\begin{cases}
		X & \text{if } F > 0,\\
		-X & \text{if } F < 0.
	\end{cases}
	\label{eq:H}
\end{equation}

\begin{eqnarray}
	E(H_{\max}) &=& \mu_x \left( 1 - 2 \, \phi\Big(-\frac{\mu_x}{\sigma_x \rho_{xf}}\Big) \right)
	+ \sigma_x \sqrt{\frac{2}{\pi}} \rho_{xf} \exp\Big(-\frac{1}{2} \frac{\mu_x^2}{\sigma_x^2 \rho_{xf}^2} \Big), \\
	SR(H_{\max}) &=& \frac{E(H_{\max})}{\sqrt{\mu_x^2 + \sigma_x^2 - E(H_{\max})^2}}.
	\label{eq:SharpeAcar}
\end{eqnarray}

In Eq. \eqref{eq:SharpeAcar}, we also need to derive $\rho_{xf}$, the correlation between the returns $X$ and the forecast $F$. This is not straightforward in the case of the simplest process which is used by \cite{Grebenkov}, to determine $\rho_{xf}$ in fonction of the paramters of the diffusive process and the parameter in the EMA used as the indicator. Julien Drouhet, my collegue in Machina Capital, derived the monstrous Eq. \eqref{monstruousoone} and Eq. \eqref{monstruoustwo} in Sect. \ref{workofjulien}. We can not consider the formula of Eq. \eqref{eq:SharpeAcar} as a 'elegant' and not obvious closed solution as the $\rho_{xf}$ need to be determined through a long derivation.

\subsection{ Closed formula derived from \cite{Acar} and \cite{Grebenkov} process}
\label{workofjulien}

These computations were derived by Julien Drouhet, a collegue of mine.

\paragraph*{Acar's formula} Let $F$ be a forecast for the price of an instrument and $X$ be its future returns. Let $a$ and $b$ be two real numbers. Let's assume that the vector $(X,F)$ is a bivariate normal distribution and let $H$ be the returns generated by the binary forecasting rule that buys $a$ units of the instrument if $F>0$ and buys $b$ units if $F<0$. \newline
In his article, \cite{Acar} develops a closed formula that gives the expected value and variance of $H$ (hence of its Sharpe ratio)  as a function of the expected value, the variance, and the correlation of $X$ and $F$ (resp. $\mu_x$, $\mu_f$, $\sigma_x^2$, $\sigma_f^2$ and $\rho_{xf}$):\newline
\begin{align}
	\E(H) &= \mu_x \left[ a\Phi\!\left( \frac{\mu_f}{\sigma_f} \right)
	+ b\Phi\!\left( \frac{-\mu_f}{\sigma_f} \right) \right]
	+ \sigma_x \frac{(a - b)}{\sqrt{2\pi}} \rho_{xf}
	\exp\!\left( -0.5 \frac{\mu_f^2}{\sigma_f^2} \right) \\[1em]
	\E(H^2) &= \mu_x^2 \left[ a^2\Phi\!\left( \frac{\mu_f}{\sigma_f} \right)
	+ b^2\Phi\!\left( \frac{-\mu_f}{\sigma_f} \right) \right]
	+ 2\mu_x\sigma_x \frac{(a^2 - b^2)}{\sqrt{2\pi}} \rho_{xf}
	\exp\!\left( -0.5 \frac{\mu_f^2}{\sigma_f^2} \right) \nonumber \\[0.5em]
	&\quad + \sigma_x^2 \Bigg[
	a^2 \left( \frac{\rho_{xf}^2}{\sqrt{2\pi}}
	\left( -\frac{\mu_f}{\sigma_f}\right) \exp\!\left(-0.5\frac{\mu_f^2}{\sigma_f^2}\right)
	+ \Phi\!\left(\frac{\mu_f}{\sigma_f}\right)\right) \nonumber \\[0.5em]
	&\quad\quad + b^2 \left( \frac{\rho_{xf}^2}{\sqrt{2\pi}}
	\left( \frac{\mu_f}{\sigma_f}\right) \exp\!\left(-0.5\frac{\mu_f^2}{\sigma_f^2}\right)
	+ \Phi\!\left(-\frac{\mu_f}{\sigma_f}\right)\right)
	\Bigg]
\end{align}

In particular, when $a=1$ and $b=-1$, the formulas read:
\begin{equation}\label{Acar_binary}
	\E(H) = \mu_x \left[ 1 - 2\Phi\left( \frac{-\mu_x}{\sigma_x \rho_{xf}} \right) \right]
	+ \sigma_x \sqrt{\frac{2}{\pi}} \rho_{xf} \exp\left( -0.5 \frac{\mu_x^2}{\sigma_x^2 \rho_{xf}^2} \right)
\end{equation}
\begin{equation}
	SR(H) = \frac{\E(H)}{\sqrt{\mu_x^2 + \sigma_x^2 - \E(H)^2}}
\end{equation}

\paragraph*{Grebenkov's dynamic}
In \cite{Grebenkov}, the authors assume the following dynamics for the returns of the underlying price (Eq. (3), p.5 of \cite{Grebenkov}):
\begin{equation}
	r_t = \varepsilon_t + \beta\sum_{k=1}^{t-1} (1-\lambda)^{t-1-k} \xi_t,
\end{equation}
where the $\varepsilon_k$'s and the $\xi_k$'s are i.i.d following $\mathcal{N}(0,1)$.

In matrix form, we can write the vector of the returns $\mathbf{r}$ as:
\begin{equation}
	\mathbf{r} = \varepsilon + \beta \mathbf{E}_{1-\lambda}\xi, 
\end{equation}
where $\mathbf{E}_q$ is the $T\times T$ matrix whose coefficients $(\mathbf{E}_q)_{j,k}$ are $0$ if $j\leq k$ and $q^{j-k-1}$ otherwise. \newline
We directly see from the matrix expression that $\mathbf{r}$ is a centered gaussian vector, and that its covariance matrix $\mathbf{C}$ satisfies (Eq. (6), p.6 of \cite{Grebenkov}):
\begin{equation} \label{cov_matric_formula}
	\mathbf{C} = \mathbf{I}_T + \beta^2 \mathbf{E}_{1-\lambda}\mathbf{E}_{1-\lambda}^T. 
\end{equation}
In particular, for two indices $j,k$, we have:
\begin{equation}
	\mathbf{C}_{j,k} = \delta_{j, k} + \frac{\beta^2}{\lambda(2-\lambda)}\left[ (1-\lambda)^{|j-k|} - (1-\lambda)^{j+k-2}\right].
\end{equation}

\paragraph*{Momentum signal}
Given this dynamic, Grebenkov defines a momentum signal as an EMA of the returns as follows (Eq. (13), p.8):
\begin{equation}
	s_t = \gamma \sum_{k=1}^{t-1} (1-\eta)^{t-1-k}r_k.
\end{equation}

\subsubsection{From Acar's formula to Grebenkov's}
\paragraph{Covariance computation}
We have:
\begin{align*}
	\langle s_t, r_t\rangle &= \left\langle \gamma \sum_{k=1}^{t-1} (1-\eta)^{t-1-k}r_k, r_t \right\rangle \\
	&= \gamma \sum_{k=1}^{t-1} (1-\eta)^{t-1-k} \mathbf{C}_{k,t} \\
	&= \gamma \sum_{k=1}^{t-1} (1-\eta)^{t-1-k} \frac{\beta^2}{\lambda(2-\lambda)}\left[ (1-\lambda)^{|t-k|} - (1-\lambda)^{t+k-2}\right] \\
	&= \gamma\frac{\beta^2}{\lambda(2-\lambda)} \left( \sum_{k=1}^{t-1} (1-\eta)^{t-1-k}(1-\lambda)^{t-k} - \sum_{k=1}^{t-1} (1-\eta)^{t-1-k}(1-\lambda)^{t+k-2} \right) \\
	&= \gamma\frac{\beta^2}{\lambda(2-\lambda)} (1-\eta)^{t-1}(1-\lambda)^{t} \left( \sum_{k=1}^{t-1} (1-\eta)^{-k}(1-\lambda)^{-k} - \sum_{k=1}^{t-1} (1-\eta)^{-k}(1-\lambda)^{k-2} \right) \\
	&= \gamma\frac{\beta^2}{\lambda(2-\lambda)} (1-\eta)^{t-1}(1-\lambda)^{t} \left( \sum_{k=1}^{t-1} \left(\frac{1}{(1-\eta)(1-\lambda)}\right)^{k} - \frac{1}{(1-\lambda)^2}\sum_{k=1}^{t-1} \left(\frac{1-\lambda}{1-\eta}\right)^{k} \right) \\
	&= \gamma\frac{\beta^2}{\lambda(2-\lambda)} (1-\eta)^{t-1}(1-\lambda)^{t} \left( \frac{\frac{1}{(1-\eta)(1-\lambda)} - \left(\frac{1}{(1-\eta)(1-\lambda)}\right)^t}{1-\frac{1}{(1-\eta)(1-\lambda)}} - \frac{1}{(1-\lambda)^2}\frac{\frac{1-\lambda}{1-\eta}-\left(\frac{1-\lambda}{1-\eta}\right)^t}{1-\frac{1-\lambda}{1-\eta}} \right)
\end{align*}
We could simplify further, but the final expression isn't much more tractable.

If we let $t\rightarrow\infty$, we obtain:
\begin{equation}
	\langle s_t, r_t\rangle = \gamma\frac{\beta^2}{\lambda(2-\lambda)} \frac{\frac{1}{1-\eta}}{\frac{1}{(1-\eta)(1-\lambda)}-1},
\end{equation}
that is:
\begin{equation}
	\langle s_t, r_t\rangle = \gamma\frac{\beta^2}{\lambda(2-\lambda)} \frac{1-\lambda}{1-(1-\eta)(1-\lambda)}.
\end{equation}

As in \cite{Grebenkov}, we rescale the parameter $\beta$ in order to make the asymptotic variance independent of the timescale $\lambda$ by defining:
\begin{equation}
	\beta_0 = \frac{\beta}{\sqrt{\lambda(2-\lambda)}},
\end{equation}
so that the asymptotic variance of the returns in (\ref{cov_matric_formula}) becomes:
\begin{equation}
	\sigma_{x,\infty}^2 = 1 + \beta_0^2.
\end{equation}
The covariance becomes:
\begin{equation}
	\langle s_\infty, r_\infty\rangle = \gamma\beta_0^2\frac{1-\lambda}{1-(1-\eta)(1-\lambda)}.
\end{equation}

\paragraph{Momentum's variance computation}
Before computing the variance of $s_t$, we state this lemma, that is a simple computation:
\begin{lemma}
	Let $p$ and $q$ be two real numbers, and $N$ a positive integer. One has:
	\begin{equation}
		\sum_{1\leq l<k\leq N-1} p^{k+l}q^{k-l} = \frac{1}{1-p/q}\left( \frac{p}{q}\frac{\left(p/q\right) - \left(p/q\right)^N}{1-p/q} - \frac{p^2 - p^{2N}}{1-p^2} \right),
	\end{equation}
	
	\begin{equation}
		\sum_{1\leq l<k\leq N-1} p^{k+l}q^{k+l} = \frac{1}{1-pq}\left( pq\frac{pq - (pq)^N}{1-pq} - \frac{(pq)^2 - (pq)^{2N}}{1-pq} \right).
	\end{equation}
\end{lemma}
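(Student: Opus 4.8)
The plan is to treat both identities as iterated finite geometric series over the triangular index set $\{1\le l<k\le N-1\}$, summing one index in closed form and then the other. For the first identity I would first factor the summand so that the two indices separate as cleanly as possible: since $p^{k+l}q^{k-l}=(pq)^{k}\,(p/q)^{l}$, fixing $k$ and summing over $l$ from $1$ to $k-1$ reduces the inner sum to a single geometric series, $\sum_{l=1}^{k-1}(p/q)^{l}=\frac{(p/q)-(p/q)^{k}}{1-p/q}$. Substituting this back leaves a single sum over $k$ whose two pieces are themselves geometric: one carries the ratio $pq$, while in the other the factor $(pq)^{k}(p/q)^{k}$ collapses to $p^{2k}$, giving a geometric series of ratio $p^{2}$. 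Applying the finite geometric-series formula to each piece and collecting terms then produces the claimed closed form.

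For the second identity the same two-stage scheme applies, and is in fact simpler: writing $m=pq$, the summand is just $m^{k+l}$, so the inner sum over $l$ gives $\frac{m^{k+1}-m^{2k}}{1-m}$ and the outer sum over $k$ splits into geometric series of ratios $m$ and $m^{2}$. I would carry out both identities in parallel, since the algebra is identical up to the bookkeeping of which ratio appears, and only at the end specialise the ratios $p/q$, $pq$, $p^{2}$ to recover the stated expressions. Throughout I would record the excluded degenerate cases ($p/q=1$, $pq=1$, $p^{2}=1$), where a ratio equals one and the geometric-series formula must be replaced by the count of terms or recovered as a limit.

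There is no genuine conceptual obstacle here: the result is a mechanical consequence of summing two nested geometric progressions. The real difficulty is purely clerical — keeping the summation limits ($l$ from $1$ to $k-1$, $k$ from $2$ to $N-1$) and the exponents consistent while matching the precise powers and ratios in the target formula, since nested geometric identities of this kind are notoriously prone to off-by-one and ratio ($pq$ versus $p/q$) slips. For that reason I would anchor the derivation with a small-$N$ sanity check — for instance $N=3$, where only the pair $(l,k)=(1,2)$ contributes, so the two sums must reduce to $p^{3}q$ and $(pq)^{3}$ respectively — and use it to validate every intermediate simplification before trusting the general expression.
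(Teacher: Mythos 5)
Your plan---write the summand so that the indices separate, sum the inner geometric series in $l$, then the two outer geometric series in $k$---is the right approach, and it is surely all the paper intends: the paper offers no proof of this lemma at all, introducing it only with the words ``that is a simple computation.'' Your intermediate steps are also correct, in particular $\sum_{l=1}^{k-1}(p/q)^{l}=\frac{(p/q)-(p/q)^{k}}{1-p/q}$ and, with $m=pq$, $\sum_{l=1}^{k-1}m^{k+l}=\frac{m^{k+1}-m^{2k}}{1-m}$.

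The genuine gap is your final assertion that ``applying the finite geometric-series formula to each piece and collecting terms then produces the claimed closed form.'' It does not: the two identities as printed are false, and the $N=3$ anchor you proposed---but never executed---already refutes them. With $m=pq$, the printed right-hand side of the second identity simplifies identically to
\begin{equation}
\frac{1}{1-m}\left( m\,\frac{m-m^{N}}{1-m}-\frac{m^{2}-m^{2N}}{1-m}\right)=\frac{m^{2N}-m^{N+1}}{\left(1-m\right)^{2}},
\end{equation}
which is strictly negative for $0<m<1$ and $N\geq 2$, whereas the left-hand side is a sum of positive terms (equal to $(pq)^{3}$ at $N=3$). Likewise, at $N=3$, $p=\frac{1}{2}$, $q=\frac{1}{3}$, the first printed right-hand side equals $-\frac{85}{8}$, not $p^{3}q=\frac{1}{24}$. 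Executing your plan faithfully gives instead
\begin{align}
\sum_{1\leq l<k\leq N-1} p^{k+l}q^{k-l} &= \frac{1}{1-p/q}\left( \frac{p}{q}\,\frac{\left(pq\right)^{2}-\left(pq\right)^{N}}{1-pq}-\frac{p^{4}-p^{2N}}{1-p^{2}} \right),\\
\sum_{1\leq l<k\leq N-1} \left(pq\right)^{k+l} &= \frac{1}{1-pq}\left( pq\,\frac{\left(pq\right)^{2}-\left(pq\right)^{N}}{1-pq}-\frac{\left(pq\right)^{4}-\left(pq\right)^{2N}}{1-\left(pq\right)^{2}} \right),
\end{align}
because after the inner summation the outer series run over $k=2,\dots,N-1$ with ratios $pq$ and $p^{2}$ (respectively $pq$ and $(pq)^{2}$). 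The printed lemma commits exactly the ratio and off-by-one slips you flagged as the main danger: ratio $p/q$ in place of $pq$ in the first identity, series started at $k=1$ instead of $k=2$, and denominator $1-pq$ in place of $1-(pq)^{2}$ in the second. A correct write-up must therefore derive these corrected identities rather than ``recover the stated expressions'' (the correction also propagates to the limits of $B_{t}$ and to the variance $\sigma_{f,\infty}^{2}$ quoted in the Proposition that follows); as written, your proof would fail at the final matching step, and the $N=3$ sanity check you yourself prescribed is precisely what catches it.
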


We can then compute the variance $\langle s_t, s_t \rangle$:
\begin{proposition}
	Under the previous model, we have:
	\begin{equation}
		\langle s_t, s_t\rangle = \gamma^2p^{2t-2}\left(A_t + 2\beta_0^2\left(B_t + (1-\lambda)^{-2} C_t\right)\right),
	\end{equation}
	where:
	\begin{align*}
		A_t &= \left(1+\beta_0^2\right)\frac{(1-\eta)^{-2}-(1-\eta)^{-2t}}{1-(1-\eta)^{-2}} - \beta_0^2(1-\lambda)^{-2}\frac{\left(\frac{1-\lambda}{1-\eta}\right)^2-\left(\frac{1-\lambda}{1-\eta}\right)^{2t}}{1-\left(\frac{1-\lambda}{1-\eta}\right)^2} \\
		B_t &= \frac{1}{1-((1-\eta)(1-\lambda))^{-1}}\left( \frac{((1-\eta)(1-\lambda))^{-2} - ((1-\eta)(1-\lambda))^{t+1}}{1-((1-\eta)(1-\lambda))^{-1}} - \frac{(1-\eta)^{-2}-(1-\eta)^{-2t}}{1-(1-\eta)^{-2}}\right) \\
		C_t &= \frac{1}{1-\frac{1-\lambda}{1-\eta}} \left( \frac{1-\lambda}{1-\eta}\frac{\frac{1-\lambda}{1-\eta}-\left(\frac{1-\lambda}{1-\eta}\right)^t}{1-\frac{1-\lambda}{1-\eta}} - \frac{\left(\frac{1-\lambda}{1-\eta}\right)^2-\left(\frac{1-\lambda}{1-\eta}\right)^{2t}}{1-\left(\frac{1-\lambda}{1-\eta}\right)^2}\right).
	\end{align*}
	In particular, in asymptotic regime:
	\begin{equation}
		\sigma_{f,\infty}^2:=\langle s_\infty, s_\infty\rangle = \gamma^2\frac{1+\frac{\beta_0^2}{1-(1-\eta)(1-\lambda)}}{1-(1-\eta)^2}.
	\end{equation}
\end{proposition}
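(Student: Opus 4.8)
The plan is to reduce $\langle s_t,s_t\rangle$ to a handful of single and double geometric sums and to match each against the closed forms $A_t$, $B_t$, $C_t$. Writing $p=1-\eta$ and $q=1-\lambda$, I would start from $s_t=\gamma\sum_{k=1}^{t-1}p^{t-1-k}r_k$ and expand the quadratic form against the explicit covariance, obtaining
\begin{equation*}
\langle s_t,s_t\rangle=\gamma^2 p^{2(t-1)}\sum_{j,k=1}^{t-1}p^{-j-k}\,\mathbf{C}_{j,k},\qquad \mathbf{C}_{j,k}=\delta_{j,k}+\beta_0^2\bigl(q^{|j-k|}-q^{j+k-2}\bigr),
\end{equation*}
where I have used $\beta^2/(\lambda(2-\lambda))=\beta_0^2$. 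The prefactor $\gamma^2 p^{2(t-1)}$ is already the one in the statement, so the task is to identify the inner double sum with $A_t+2\beta_0^2\bigl(B_t+(1-\lambda)^{-2}C_t\bigr)$.

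Next I would split the double sum into its diagonal part ($j=k$) and its off-diagonal part, writing the latter as $2\sum_{1\le j<k\le t-1}$ by symmetry. The diagonal involves only the single geometric series $\sum_j p^{-2j}$ (coming both from $\delta_{j,k}$ and from $q^0=1$) together with $\sum_j(q/p)^{2j}$ (from the boundary term $q^{2j-2}$); summing these and collecting the coefficient $1+\beta_0^2$ on the first reproduces $A_t$ term by term. For the off-diagonal part, the contribution of $q^{|j-k|}=q^{k-j}$ is the bivariate sum $\sum_{j<k}(p^{-1})^{j+k}q^{k-j}$, which has exactly the shape of the first summation lemma with bases $a=p^{-1}$, $b=q$, and evaluates to $B_t$; the boundary contribution is $q^{-2}\sum_{j<k}(q/p)^{j+k}$, the single-base specialisation $b=1$ of the same lemma with $a=q/p$, and evaluates to $(1-\lambda)^{-2}C_t$. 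Assembling the diagonal piece with the two off-diagonal pieces then gives the stated finite-$t$ identity (up to the routine sign bookkeeping of the boundary term).

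For the asymptotic variance I would avoid taking the limit inside the cumbersome finite-$t$ expression and instead exploit stationarity directly. As $t\to\infty$ the boundary term $q^{j+k-2}$ decays to zero, so the return autocovariance converges to the stationary kernel $\langle r_i,r_j\rangle\to\delta_{i,j}+\beta_0^2 q^{|i-j|}$ — the very kernel that already produced $\langle s_\infty,r_\infty\rangle$ above. Re-indexing $s_\infty=\gamma\sum_{m\ge1}p^{m-1}r_{t-m}$, the variance becomes $\gamma^2\sum_{a,b\ge0}p^{a+b}\bigl(\delta_{a,b}+\beta_0^2 q^{|a-b|}\bigr)$, and the remaining double sum $\sum_{a,b\ge0}p^{a+b}q^{|a-b|}$ collapses, via the substitution $d=|a-b|$, into a product of two elementary geometric series in $p^2$ and $pq$, yielding the compact closed form announced in the proposition.

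The main obstacle is the off-diagonal bookkeeping: the absolute value $|j-k|$ forces the split into $j<k$ and $j>k$, and the two resulting bivariate geometric sums must be resolved cleanly — which is precisely why the summation lemma is established first. A secondary subtlety lies in the $t\to\infty$ limit of the finite formula, where the decaying prefactor $p^{2(t-1)}$ competes with partial sums that grow like $p^{-2t}$, so that only the $p^{-2t}$-scaled terms survive; the stationary shortcut is what lets me bypass that competition and read off the limit transparently.
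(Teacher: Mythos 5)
Your finite-$t$ argument is essentially the paper's own proof: the same expansion $\langle s_t,s_t\rangle=\gamma^2p^{2t-2}\sum_{j,k}p^{-j-k}\mathbf{C}_{j,k}$ with $\mathbf{C}_{j,k}=\delta_{j,k}+\beta_0^2\bigl(q^{|j-k|}-q^{j+k-2}\bigr)$, the same diagonal/off-diagonal split with the symmetry factor $2$, and the same evaluation of $A_t$, $B_t$, $C_t$ by the summation lemma applied to the bases $1/p$ and $q$. The one caution there is that the boundary kernel carries a minus sign, so the third block is $-q^{-2}\sum_{j<k}(q/p)^{j+k}$; you defer this to ``routine sign bookkeeping,'' but since the proposition displays $+(1-\lambda)^{-2}C_t$, this sign is exactly the kind of detail your assembled identity has to pin down rather than wave away.

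The genuine gap is in your asymptotic step. The stationarity shortcut is methodologically sound (and cleaner than the paper's route, which takes $t\to\infty$ term by term in $p^{2t-2}A_t$, $p^{2t-2}B_t$, $p^{2t-2}C_t$), but the double sum does not collapse into a product of two geometric series: with $d=|a-b|$, each $d\ge 1$ is attained by two ordered pairs, so
\begin{equation*}
\sum_{a,b\ge 0}p^{a+b}q^{|a-b|}
=\frac{1}{1-p^2}\left(1+\frac{2pq}{1-pq}\right)
=\frac{1+pq}{(1-p^2)(1-pq)}
\neq\frac{1}{(1-p^2)(1-pq)}.
\end{equation*}
(Check $p=q=1/2$: the true sum is $20/9$, the product of the two series is $16/9$, and the partial sums already exceed $16/9$.) Only the single-counted product reproduces the displayed formula $\sigma_{f,\infty}^2=\gamma^2\bigl(1+\tfrac{\beta_0^2}{1-pq}\bigr)/(1-p^2)$; your method, executed correctly, gives $\sigma_{f,\infty}^2=\tfrac{\gamma^2}{1-p^2}\bigl(1+\beta_0^2\tfrac{1+pq}{1-pq}\bigr)$, i.e.\ an extra factor $(1+pq)$ on the trend term. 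So either you silently dropped the factor $2$ to force agreement, or your computation in fact contradicts the statement you set out to prove. The discrepancy is not cosmetic: in the small-$\eta,\lambda$ limit it is precisely $(1+pq)\to 2$ that generates the term $2\beta_0^2(\lambda+\eta)$, rather than $\beta_0^2(\lambda+\eta)$, under the square root of Eq.~\eqref{eq:optimal}, so the doubled-count value is the one consistent with Grebenkov's Sharpe formula. A complete write-up must either exhibit where the factor $2$ went or explicitly flag that the proposition's asymptotic display needs the $(1+pq)$ correction.
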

\begin{proof}
	Let $p=1-\eta$ and $q=1-\lambda$. We have using \ref{cov_matric_formula}:
	\begin{align*}
		\langle s_t, s_t\rangle &= \gamma^2\sum_{1\leq k,l\leq t-1} p^{2t-2-k-l} \langle r_t, r_t\rangle \\
		&= \gamma^2 \left(\sum_{k=1}^{t-1} p^{2t-2-2k}\left(1+\beta_0^2\left(1-q^{2k-2}\right)\right) + 2\sum_{1\leq l<k\leq t-1}p^{2t-2-k-l}\beta_0^2\left(q^{k-l}-q^{k+l-2}\right)\right) \\
		&= \gamma^2p^{2t-2}\left(A_t + 2\beta_0^2B_t + 2\beta_0^2 q^{-2} C_t\right),
	\end{align*}
	where:
	\begin{align*}
		A_t &= \sum_{k=1}^{t-1} p^{-2k}\left(1+\beta_0^2\left(1-q^{2k-2}\right)\right), \\
		B_t &= \sum_{1\leq l<k\leq t-1}p^{-k-l} q^{k-l}, \\
		C_t &= \sum_{1\leq l<k\leq t-1}p^{-k-l} q^{k-l}.
	\end{align*}
	Applying the lemma on $\Tilde{p}=1/p$ and $\Tilde{q}=q$, we can compute $A_t$, $B_t$ and $C_t$ and we obtain the wanted formula.\newline
	To get the asymptotic formula, we just have to notice that:
	\begin{align*}
		p^{2t-2} A_t &\underset{t\rightarrow\infty}{\longrightarrow} \frac{1+\beta_0^2}{1-p^2}, \\
		p^{2t-2} B_t &\underset{t\rightarrow\infty}{\longrightarrow} \frac{p^{-2}}{1-(pq)^{-2}}\frac{1}{1-p^{-2}}, \\
		p^{2t-2} C_t &\underset{t\rightarrow\infty}{\longrightarrow} 0,
	\end{align*}
	which concludes after simplifying the expression.
\end{proof}

\paragraph{Acar's formula}
In this section, we assume for simplicity that we are in the stationary regime ($t\rightarrow\infty$) in order to remove border effects.\newline
Recall the formula for the covariance between $s_t$ and $r_t$ in this regime:
\begin{equation}
	\langle s_\infty, r_\infty\rangle = \gamma\beta_0^2\frac{1-\lambda}{1-(1-\eta)(1-\lambda)}.
\end{equation}

Since the returns are centered in this model, Acar's formula in the case of a binary signal becomes (from Eq. (\ref{Acar_binary})):
\begin{align*}
	E(H) &= \sqrt{\frac{2}{\pi}} \sigma_x \rho_{fx} \\
	&= \sqrt{\frac{2}{\pi}} \frac{\langle s_\infty, r_\infty\rangle}{\sqrt{\langle s_\infty, s_\infty\rangle}}.
\end{align*}
Using the formulas for $\langle s_\infty, r_\infty\rangle$ and $\langle s_\infty, s_\infty\rangle$, we obtain:
\begin{equation}
	\E(H) = \sqrt{\frac{2}{\pi}}\beta_0^2 \frac{(1-\lambda)\sqrt{1-(1-\eta)^2}}{\sqrt{\left(1-(1-\eta)(1-\lambda)\right)\left(1+\beta_0^2-(1-\eta)(1-\lambda)\right)}}.
\label{monstruousoone}
\end{equation}
We then get:
\begin{equation}
	SR(H) = \frac{\E(H)}{\sqrt{1+\beta_0^2 - \E(H)^2}}.
	\label{monstruoustwo}
\end{equation}

\subsection{ Acar's formula for linear signals}
\label{proofhold}

Let's mention a useful lemma for what follows:
\begin{lemma}
	Let $Z\sim\mathcal{N}(0_N, I_N)$ be a gaussian vector. Let $A\in\mathcal{S}_N(\R)$ be an $N\times N$ positive-semidefinite matrix. Then one has:
	\begin{equation}
		\E\left[ Z^T A Z \right] = \mathrm{tr}(A),
	\end{equation}
	\begin{equation}
		\E\left[ \left(Z^T A Z\right)^2 \right] = 2\mathrm{tr}(A^2).
	\end{equation}
\end{lemma}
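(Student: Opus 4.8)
The plan is to reduce both identities to coordinate-wise moment computations for the i.i.d.\ standard Gaussian entries $Z_1,\dots,Z_N$, using only the symmetry of $A$; positive-semidefiniteness plays no role in the moment algebra and serves merely to guarantee $Z^TAZ\geq 0$ in the downstream Sharpe-ratio computation. First I would expand the quadratic form as $Z^TAZ=\sum_{i,j}A_{ij}Z_iZ_j$ and apply linearity of expectation together with the second-moment identity $\E[Z_iZ_j]=\delta_{ij}$. This collapses the double sum to $\sum_i A_{ii}=\mathrm{tr}(A)$, which establishes the first identity immediately.

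For the second identity I would expand the square as a quadruple sum, $\left(Z^TAZ\right)^2=\sum_{i,j,k,l}A_{ij}A_{kl}\,Z_iZ_jZ_kZ_l$, and evaluate the fourth-order moment $\E[Z_iZ_jZ_kZ_l]$ via Isserlis' (Wick's) theorem: for a centered Gaussian it equals the sum over the three pairings, $\delta_{ij}\delta_{kl}+\delta_{ik}\delta_{jl}+\delta_{il}\delta_{jk}$. Substituting and contracting each Kronecker delta identifies the three resulting terms as $(\mathrm{tr}A)^2$, $\mathrm{tr}(AA^T)$, and $\mathrm{tr}(A^2)$ respectively; since $A$ is symmetric the last two coincide, so that $\E[(Z^TAZ)^2]=(\mathrm{tr}A)^2+2\,\mathrm{tr}(A^2)$. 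The only computational step requiring care is the combinatorial accounting of the three Wick pairings and the recognition that $\sum_{i,j}A_{ij}A_{ij}=\mathrm{tr}(AA^T)=\mathrm{tr}(A^2)$ under symmetry; everything else is routine linearity.

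The point I would flag as the main obstacle to a \emph{literal} reading of the statement is the extra term $(\mathrm{tr}A)^2$: the full raw second moment is not $2\,\mathrm{tr}(A^2)$ but $(\mathrm{tr}A)^2+2\,\mathrm{tr}(A^2)$. The clean value $2\,\mathrm{tr}(A^2)$ is precisely the \emph{variance} $\mathrm{Var}(Z^TAZ)=\E[(Z^TAZ)^2]-(\E[Z^TAZ])^2$, which is exactly the quantity the subsequent Sharpe-ratio derivation requires; equivalently, the stated formula holds verbatim for the centered quadratic form, or in the trace-free case $\mathrm{tr}A=0$. I would therefore organize the write-up as a single Isserlis contraction and then subtract the square of the first identity, so that the denominator term $2\,\mathrm{tr}(A^2)$ emerges transparently from one bookkeeping step while making the role of the $(\mathrm{tr}A)^2$ correction explicit.
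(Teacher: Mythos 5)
Your proof is correct, and your diagnosis of the ``main obstacle'' is exactly right: the lemma's second identity is false as written. For $Z\sim\mathcal{N}(0_N,I_N)$ and symmetric $A$, the raw second moment is
\begin{equation}
	\E\left[\left(Z^T A Z\right)^2\right] = \left(\mathrm{tr}\,A\right)^2 + 2\,\mathrm{tr}(A^2),
\end{equation}
so the clean value $2\,\mathrm{tr}(A^2)$ is the \emph{variance} of $Z^TAZ$, not its second moment, and the stated formula holds verbatim only when $\mathrm{tr}\,A=0$. The paper's own proof makes this plain: it diagonalizes $A$ to write $Z^TAZ=\sum_i\lambda_iP_i^2$ with $P_i$ i.i.d.\ standard normal, and the ``direct computation'' it invokes gives $\E[(\sum_i\lambda_iP_i^2)^2]=(\sum_i\lambda_i)^2+2\sum_i\lambda_i^2$, i.e.\ precisely the expression with the $(\mathrm{tr}\,A)^2$ term that the lemma omits. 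Your route via Isserlis/Wick pairings is a coordinate-wise version of the same moment computation (the paper's spectral reduction trades the pairing combinatorics for chi-squared moments), so the two approaches are essentially equivalent; the substantive difference is that yours is carried out correctly.

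Your correction also matters downstream, not just cosmetically. In the proposition where the lemma is applied, the relevant matrix is $\Sigma^{1/2}A\Sigma^{1/2}$ with $A$ the trace-free matrix $\begin{pmatrix}0&1/2\\1/2&0\end{pmatrix}$, but $\mathrm{tr}\left(\Sigma^{1/2}A\Sigma^{1/2}\right)=\mathrm{tr}(A\Sigma)=\rho_{xf}\sigma_x\sigma_f\neq 0$, so the missing $(\mathrm{tr})^2$ term does not vanish there. (The paper compounds this with a second slip, replacing $\mathrm{tr}\left((A\Sigma)^2\right)$ by $\mathrm{tr}\left(A^2\Sigma^2\right)$, which is invalid since $A\Sigma$ and $\Sigma A$ do not commute.) Your proposed reorganization---compute $\E[(Z^TAZ)^2]$ by one Wick contraction, then subtract $\left(\E[Z^TAZ]\right)^2$---yields $\mathrm{Var}(H)=2\,\mathrm{tr}\left((A\Sigma)^2\right)=\sigma_x^2\sigma_f^2\left(1+\rho_{xf}^2\right)$, which is exactly what recovers the proposition's (correct) conclusion $SR(H)=\rho_{xf}/\sqrt{1+\rho_{xf}^2}$; the paper's uncorrected chain would instead produce a variance of $(\sigma_x^4+\sigma_f^4)/2$ and would not give that result.
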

\begin{proof}
	If we diagonalize $A$, we can see that:
	\begin{equation}
		Z^T A Z = \sum_{i=1}^N \lambda_i P_i^2,
	\end{equation}
	where $P_1, \cdots, P_N$ are i.i.d. $\mathcal{N}(0,1)$ variables, and $\lambda_1, \cdots, \lambda_N$ are $A$'s (non-negative) eigenvalues.\newline
	A direct computation gives the result.
\end{proof}

We recall that $F$ is the forecast for the price of an instrument and $X$ is its future returns\footnote{We keep the notations $\mu_x, \mu_f, \sigma_x, \sigma_f$ and $\rho_{xf}$.}. The following result was proven by \cite{Firoozye}.
\begin{proposition}
	Let us assume that $(X,F)$ is a \textbf{centered} bivariate normal distribution and let $H$ be the returns of the (linear) strategy that trades $F$, that is:
	\begin{equation}
		H = FX.
	\end{equation}
	Then the expected value and the Sharpe ratio of $H$ are computable:
	\begin{equation}
		\E(H) = \rho_{xf}\sigma_x\sigma_f
	\end{equation}
	\begin{equation}
		SR(H) = \frac{\rho_{xf}}{\sqrt{1+\rho_{xf}^2}}.
	\end{equation}
\end{proposition}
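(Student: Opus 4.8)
The plan is to compute the first two moments of $H=FX$ directly and then take their ratio; since $(X,F)$ is centred and jointly Gaussian, everything reduces to second- and fourth-order Gaussian moments, so no limiting or optimisation argument is needed. First I would dispatch the mean: for a centred jointly Gaussian pair, $\E(H)=\E(FX)=\mathrm{Cov}(F,X)=\rho_{xf}\sigma_x\sigma_f$, which is already the first claimed identity.

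The only genuinely computational step is the variance, $\var(H)=\E(F^2X^2)-\E(H)^2$. Here $\E(F^2X^2)$ is a fourth moment of a jointly Gaussian vector, which I would evaluate with Isserlis' (Wick's) theorem by summing over the pairings of $(X,X,F,F)$: this gives $\E(X^2F^2)=\E(X^2)\E(F^2)+2\,\E(XF)^2=\sigma_x^2\sigma_f^2\bigl(1+2\rho_{xf}^2\bigr)$. Subtracting $\E(H)^2=\rho_{xf}^2\sigma_x^2\sigma_f^2$ yields $\var(H)=\sigma_x^2\sigma_f^2\bigl(1+\rho_{xf}^2\bigr)$, and forming $SR(H)=\E(H)/\sqrt{\var(H)}$ makes the factor $\sigma_x\sigma_f$ cancel, leaving $SR(H)=\rho_{xf}/\sqrt{1+\rho_{xf}^2}$ as stated.

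A route more in keeping with the preceding lemma is to write $(X,F)^T=\Sigma^{1/2}Z$ with $Z\sim\mathcal{N}(0_2,I_2)$ and $\Sigma$ the covariance of $(X,F)$, and to represent the product as a quadratic form $H=Z^TAZ$ with $A=\Sigma^{1/2}B\Sigma^{1/2}$, where $B$ is the symmetric $2\times2$ matrix with zero diagonal and off-diagonal entries $1/2$ (so that the quadratic form returns exactly $FX$). Then $\E(H)=\mathrm{tr}(A)=\mathrm{tr}(B\Sigma)$ and $\var(H)=2\,\mathrm{tr}(A^2)=2\,\mathrm{tr}\bigl((B\Sigma)^2\bigr)$ by cyclicity of the trace, and evaluating these two traces reproduces the same answer. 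The main obstacle, and essentially the only point demanding care, is that $B$ (hence $A$) is not positive-semidefinite, so the lemma cannot be quoted verbatim; I would observe that its trace identities follow purely from diagonalising $A$ with real eigenvalues of either sign and using $\E(P^2)=1$, $\E(P^4)=3$ for a standard normal $P$, so they remain valid for any symmetric $A$, with the second identity understood as the variance $\var(Z^TAZ)=2\,\mathrm{tr}(A^2)$ (the raw second moment carrying an additional $\mathrm{tr}(A)^2$).
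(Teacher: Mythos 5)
Your proposal is correct, and your second route is exactly the paper's intended argument: the paper also writes $H$ as a quadratic form $\tilde{Z}^{T}\Sigma^{1/2}A\Sigma^{1/2}\tilde{Z}$ (the paper's $A$ is your $B$) and invokes the trace lemma. Your first route, the direct Isserlis/Wick pairing giving $\E(F^2X^2)=\sigma_x^2\sigma_f^2(1+2\rho_{xf}^2)$, does not appear in the paper at all and is the more elementary of the two. More importantly, the points of "care" you flag are precisely the places where the paper's own proof stumbles, so your write-up is not merely a rederivation but a repair. First, the paper's lemma asserts $\E[(Z^TAZ)^2]=2\,\mathrm{tr}(A^2)$, which omits the $(\mathrm{tr}\,A)^2$ term; as you note, the identity is only valid as a variance statement, $\var(Z^TAZ)=2\,\mathrm{tr}(A^2)$, which is how you use it. Second, the paper applies that lemma to $\Sigma^{1/2}A\Sigma^{1/2}$, which is congruent to your $B$ (eigenvalues $\pm\tfrac{1}{2}$) and hence not positive semidefinite, violating the lemma's stated hypothesis; your observation that the trace identities extend to arbitrary symmetric matrices closes this gap. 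Third, the paper's trace computation contains a commutation slip: it passes from $\mathrm{tr}\bigl(\Sigma^{1/2}A\Sigma A\Sigma^{1/2}\bigr)$ to $\mathrm{tr}(A^2\Sigma^2)$, whereas cyclicity gives $\mathrm{tr}\bigl((A\Sigma)^2\bigr)$; these differ because $A$ and $\Sigma$ do not commute. The paper's resulting expression $\E[H^2]=\tfrac{1}{2}\bigl(\sigma_x^4+2\rho_{xf}^2\sigma_x^2\sigma_f^2+\sigma_f^4\bigr)$ is therefore wrong in general (the correct value, which both of your computations deliver, is $(1+2\rho_{xf}^2)\sigma_x^2\sigma_f^2$), and taken at face value it would yield $\var(H)=\tfrac{1}{2}(\sigma_x^4+\sigma_f^4)$, which does not reproduce the claimed Sharpe ratio except in degenerate cases. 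Your version, $\var(H)=2\,\mathrm{tr}\bigl((B\Sigma)^2\bigr)=(1+\rho_{xf}^2)\sigma_x^2\sigma_f^2$, is the correct computation, so the proposition as stated stands, with your argument rather than the paper's chain of equalities supplying the valid proof.
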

\begin{proof}
	For the computation of $\E H$, we don't assume that $(X,F)$ is centered.\newline
	Let's define, with obvious notations, $Z:=\left(X, F\right)^T\sim\mathcal{N}(\mu, \Sigma)$. Then the random vector:
	\begin{equation}
		\Tilde{Z} := \Sigma^{-1/2} (Z - \mu),
	\end{equation}
	follows $\mathcal{N}(0_N, I_N)$.\newline
	Then, if we define the matrix:
	\begin{equation}
		A = 
		\begin{pmatrix}
			0 & \tfrac{1}{2} \\[6pt]
			\tfrac{1}{2} & 0
		\end{pmatrix},
	\end{equation}
	we can see that:
	\begin{equation}
		H = Z^T A Z.
	\end{equation}
	Since $Z = \mu + \Sigma^{1/2} \Tilde{Z}$, we can see that:
	\begin{equation}
		H = \mu^T A \mu + 2\mu^T A \Sigma^{1/2} \Tilde{Z} + \Tilde{Z}^T \Sigma^{1/2}A\Sigma^{1/2}\Tilde{Z}.
	\end{equation}
	Taking the expected value of this quantity, we get:
	\begin{align*}
		\E H &= \E\left[ \mu^T A \mu + 2\mu^T A \Sigma^{1/2} \Tilde{Z} + \Tilde{Z}^T \Sigma^{1/2}A\Sigma^{1/2}\Tilde{Z} \right] \\
		&= \mu^T A \mu + 0 + \E\left[ \Tilde{Z}^T \Sigma^{1/2}A\Sigma^{1/2}\Tilde{Z} \right] \\
		&= \mu_x\mu_f + \mathrm{tr}\left( \Sigma^{1/2}A\Sigma^{1/2} \right) \\
		&= \mu_x\mu_f + \mathrm{tr}\left( A\Sigma\right) \\
		&= \mu_x\mu_f + \rho\sigma_x\sigma_f \\
	\end{align*}
	
	For the second formula, let's assume the $\mu=0_2$ and compute $\E [H^2]$. In this case, using the lemma:
	\begin{align*}
		\E [H^2] &= \E\left[ (\Tilde{Z}^T \Sigma^{1/2}A\Sigma^{1/2}\Tilde{Z})^2 \right] \\
		&= 2\mathrm{tr}\left( \left( \Sigma^{1/2}A\Sigma^{1/2} \right)^2 \right) \\
		&= 2\mathrm{tr}\left( \Sigma^{1/2}A\Sigma^{1/2}\Sigma^{1/2}A\Sigma^{1/2} \right) \\
		&= 2\mathrm{tr}\left( A^2 \Sigma^2 \right) \\
		&= \frac{1}{2}\left( \sigma_x^4 + 2\rho_{xf}^2\sigma_x^2\sigma_f^2 + \sigma_f^4 \right),
	\end{align*}
	which gives the result.
	
\end{proof}

\begin{corollary}
	In \cite{Grebenkov} stationary model, we have the following formulas:
	\begin{equation}
		E(H) = \langle s_\infty, r_\infty\rangle,
	\end{equation}
	
	\begin{equation}
		SR(H) = \frac{1}{\sqrt{1+\frac{\langle s_\infty, s_\infty\rangle\langle r_\infty, r_\infty\rangle}{\langle s_\infty, r_\infty\rangle^2}}},
	\end{equation}
	where $\langle s_\infty, r_\infty\rangle$, $\langle r_\infty, r_\infty\rangle$ and $\langle s_\infty, s_\infty\rangle$ have closed-form formulas only depending on the model's parameters.
\end{corollary}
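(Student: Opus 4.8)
The plan is to read off the Corollary as the specialization of the preceding Proposition (the linear-signal identity of \cite{Firoozye}) to the stationary pair produced by Grebenkov's model, so that essentially no new work is required beyond a change of notation. First I would set the future return $X := r_\infty$ and the forecast $F := s_\infty$, so that the linear trading rule $H = FX$ of the Proposition becomes $H = s_\infty r_\infty$. The single hypothesis of the Proposition is that $(X,F)$ be a centered bivariate normal vector, and I would verify this directly: each $r_k$ is a linear combination of the i.i.d. standard Gaussians $\varepsilon_j$ and $\xi_j$, and $s_t = \gamma\sum_{k=1}^{t-1}(1-\eta)^{t-1-k} r_k$ is a finite linear combination of the $r_k$, so $(r_t,s_t)$ is jointly Gaussian and centered for every $t$; letting $t\to\infty$ (the geometric sums converge because $0<\eta,\lambda<1$ forces $(1-\eta)(1-\lambda)<1$) yields the centered stationary pair $(r_\infty,s_\infty)$.

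With the hypothesis in place, the first identity is immediate. The Proposition gives $\E(H)=\rho_{xf}\,\sigma_x\sigma_f$, and since $(r_\infty,s_\infty)$ is centered, $\rho_{xf}\,\sigma_x\sigma_f$ is exactly the covariance $\mathrm{Cov}(r_\infty,s_\infty)=\langle s_\infty,r_\infty\rangle$. Hence $\E(H)=\langle s_\infty,r_\infty\rangle$, as claimed. For the Sharpe ratio I would start from the Proposition's $SR(H)=\rho_{xf}/\sqrt{1+\rho_{xf}^2}$ and rewrite it as $SR(H)=1/\sqrt{1+\rho_{xf}^{-2}}$. Because both variables are centered, the squared correlation is $\rho_{xf}^2=\langle s_\infty,r_\infty\rangle^2/(\langle r_\infty,r_\infty\rangle\,\langle s_\infty,s_\infty\rangle)$, so $\rho_{xf}^{-2}=\langle s_\infty,s_\infty\rangle\,\langle r_\infty,r_\infty\rangle/\langle s_\infty,r_\infty\rangle^2$; substituting this into the previous display reproduces the stated formula exactly.

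Finally, to support the closing clause that the three quantities are closed forms in the model parameters, I would cite the computations already carried out in this appendix: the covariance computation gives $\langle s_\infty,r_\infty\rangle=\gamma\beta_0^2(1-\lambda)/(1-(1-\eta)(1-\lambda))$, the momentum-variance Proposition gives $\langle s_\infty,s_\infty\rangle=\gamma^2\bigl(1+\beta_0^2/(1-(1-\eta)(1-\lambda))\bigr)/(1-(1-\eta)^2)$, and the covariance-matrix relation gives $\langle r_\infty,r_\infty\rangle=\sigma_{x,\infty}^2=1+\beta_0^2$. In the Sharpe ratio the scale constant $\gamma$ cancels (as it must for a ratio of returns), leaving an explicit function of $(\eta,\lambda,\beta_0)$.

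I do not expect a genuine obstacle, since this is a corollary: the only point needing more than a line of justification is the joint Gaussianity together with the existence of the stationary limit, and both are immediate from the linear-in-Gaussian structure of the model. The one laborious ingredient, the evaluation of $\langle s_\infty,s_\infty\rangle$ through the summation lemma, has already been discharged in the preceding Proposition, so here it only remains to assemble the pieces.
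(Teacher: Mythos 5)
Your proposal is correct and follows essentially the same route as the paper: the corollary is stated there without separate proof precisely because it is the specialization of the preceding Proposition to the centered stationary Gaussian pair $(r_\infty, s_\infty)$, with $\E(H)=\rho_{xf}\sigma_x\sigma_f=\langle s_\infty,r_\infty\rangle$ and $SR(H)=\rho_{xf}/\sqrt{1+\rho_{xf}^2}=1/\sqrt{1+\rho_{xf}^{-2}}$ rewritten via the closed forms for $\langle s_\infty,r_\infty\rangle$, $\langle s_\infty,s_\infty\rangle$, $\langle r_\infty,r_\infty\rangle$ computed earlier in the appendix. Your added checks (joint Gaussianity from the linear-in-Gaussian structure, convergence of the geometric sums, and the cancellation of $\gamma$) are exactly the implicit steps the paper relies on; the only unstated caveat, shared with the paper itself, is that the rewriting of the Sharpe ratio assumes $\rho_{xf}>0$, which holds here since $\langle s_\infty,r_\infty\rangle>0$ in the model.
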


\subsection{	\cite{Ferreira18}} 
\cite{Ferreira18}, which cited \cite{Grebenkov}, used also a linear trading rule (Eq. \eqref{eq:linearFerreira} ) in their formula (3) from Section 2.2.2, "Risk and Returns".

\begin{equation}
	\langle R \rangle = \frac{1}{T - N + 1} \sum_{t=N}^{T} m_{t-1}(N) \, X_t
	\label{eq:linearFerreira}
\end{equation} 	

In Eq. \eqref{eq:linearFerreira}, $m$ is a simple moving average (SMA) of returns rather than an exponential moving average (EMA). \cite{Ferreira18} derived the Sharpe ratio over two pages without using the results of \cite{Acar}, obtaining an explicit formula in contrast to \cite{Acar}. In their formula (Eq. \eqref{eq:SharpeFerreira} ), which appears in Section 2.3, "Limits and Interpretations", the Sharpe ratio depends only on $\rho(t, t-i)$, the correlation coefficient of returns at times $t$ and $t-i$. \cite{Ferreira18} provided the formula closest to \cite{Grebenkov}. The formula is applicable to a more generalized diffusive process as in \cite{Grebenkov}, but it is less straightforward and not entirely consistent, since SMA differs from EMA. \cite{Ferreira18} was published later and includes more parameters in the formula.

\begin{equation}
	SR = 
	\frac{
		\sum_{i=1}^{N} \rho(t, t-1)
	}{
		\sqrt{
			N + \Big( \sum_{i=1}^{N} \rho(t, t-i) \Big)^2 + \sum_{\substack{i,j=1 \\ i \neq j}}^{N} \rho(t-j, t-i)
		}
	}
	\label{eq:SharpeFerreira}
\end{equation}

\subsection{	\cite{Zakamuli22}} 
\cite{Zakamuli22} is also based on a simple moving average (SMA) of returns, Eq. \eqref{eq:momzakamuli} ("2. Time Series Momentum Strategies") but employs a binary strategy in Eq. \eqref{eq:binaryzakamuli}. They derive the Sharpe ratio in Eq. \eqref{eq:sharpezakamuli} using the same approach as \cite{Acar}, which requires the correlation between returns and the forecast ($\rho_n$ is "the correlation coefficient between $r_t$ and the $\MOM_{t-1}(n)$ ). This correlation is not straightforward to compute to claim explicit formula.

\begin{equation}
	\MOM_{t-1}(n) = \sum_{i=1}^{n} X_{t-i}
	\label{eq:momzakamuli}
\end{equation}

\begin{equation}
	R_t^{LS} =
	\begin{cases}
		r_t, & \text{if } \MOM_{t-1}(n) > 0,\\[2mm]
		2 r_{f,t} - r_t, & \text{otherwise.}
	\end{cases}
	\label{eq:binaryzakamuli}
\end{equation}

\begin{eqnarray}
	E[R_t^{LS}] &=& (2 \, \phi(-d) - 1)\, \mu + 2 \, \big( g + \phi(d) \, r_f \big), \\
	\text{Var}[R_t^{LS}] &=& (\mu^2 + \sigma^2) + 4 r_f \big( g - (\mu - r_f) \phi(d) \big) - \big( E[R_t^{LS}] \big)^2, \\
	g &=& \sigma \, \rho_n \, \phi(d)
	\label{eq:sharpezakamuli}
\end{eqnarray}

\section{Data}
\label{Data}

\begin{table} [H]
	\begin{center}
		\begin{tabular}{|c|} \hline
			Commodities \\  \hline
			\begin{tabular}[t]{l}
				
				Brent Crude (IFEU \$/bbl) \\
				Cocoa (IFUS \$/mt) \\
				Coffee (IFUS \$/lbs) \\
				Corn (CBT \$/bu) \\
				Cotton \#2 (IFUS \$/lbs) \\ 
				Crude Oil WTI (NYM \$/bbl) \\ 
				ECX EUA (IFEU EUR/t) \\
				Feeder Cattle (CME \$/lbs) \\ 
				Gasoil (IFEU \$/mt) \\
				Gold (NYM \$/ozt) \\
				Hard Red Wtr Wheat (CBT \$/bu) \\
				High Grade Copper (NYM \$/lbs) \\
				Iron Ore 62\% Fe, CFR China (TSI) (NYM \$/mt) \\ 
				Lean Hogs (CME \$/lbs) \\
				Live Cattle (CME \$/lbs) \\
				Lumber (CME \$/bft) \\
				Milling Wheat (LIF EUR/t) \\ 
				Natural Gas (NYM \$/mmbtu) \\
				NY Harb RBOB (NYM \$/gal) \\
				NY Harbor ULSD (NYM \$/gal) \\
				Oats (CBT \$/bu) \\
				Orange Juice (IFUS \$/lbs) \\ 
				Palladium (NYM \$/ozt) \\
				Platinum (NYM \$/ozt) \\
				Rough Rice (CBT \$/cwt) \\
				Rubber RSS3 (TKT JPY/kg) \\
				Silver (NYM \$/ozt) \\
				Soybean Meal (CBT \$/t) \\ 
				Soybean Oil (CBT \$/lbs) \\
				Soybeans (CBT \$/bu) \\
				Sugar \#11 (IFUS \$/lbs) \\ 
				Wheat (Chicago) -  Contract \\
				
			\end{tabular}  \\  \hline
		\end{tabular}
	\end{center}
	\caption{
		List of the 32 commodities futures traded in the USA or in Europe. }
	\label{tab:listofinsturmentcommo}
\end{table}

\begin{table}[H]
	\begin{center}
		\begin{turn}{90}
			\begin{tabular}{|c|c|c|} \hline
				Stock indices & Bond indices & FOREX \\  \hline
				\begin{tabular}[t]{l}
					AMSTERDAM EOE Idx\\
					S\&P Midcap 400 Idx e-mini\\
					Russell 2000 Idx e-mini\\
					Cac 40\\
					Dax\\
					Ftse 100\\
					STOXX Europe 600 Index Futures\\
					Hang Sen\\
					Mib S\&p-mif\\
					Nikkei 225 Osaka\\
					Topix\\
					Kospi 200\\
					Ibex 35\\
					Mini MSCI Emerging Markets Index Future\\
					Nasdaq E-mini\\
					S\&P 500 e-mini\\
					Dj Euro Stoxx\\
					S\&P Canada 60-ME\\
					SPI 200 Idx\\
					Mini Dow Futures \\
				\end{tabular} &
				\begin{tabular}[t]{l}
					BUND 10Yr\\
					CAD Bond 10Yr\\
					Bobl\\
					Schatz\\
					Long-term Euro-btp\\
					Euro-buxl Futures\\
					LONG Gilt 10Yr\\
					10yr Fr Gov Bond\\
					US T-NOTE 5Yr\\
					JGB 10Yr\\
					US T-NOTE 10Yr\\
					US T-Note 2Yr\\
					Ultra T-Bonds Combined\\
				\end{tabular} &
				\begin{tabular}[t]{l}
					
					AUD/USD Fut.\\
					GBP/USD Fut.\\
					CAD/USD Fut.\\
					EUR/USD Fut.\\
					JPY/USD Fut.\\
					MXN/USD Fut.\\
					NZD/USD Fut.\\
					CHF/USD Fut.\\
				\end{tabular} \\  \hline
			\end{tabular}
		\end{turn}
	\end{center}
	\caption{
		List of the 41 instruments among currencies, equity indices and Bonds. }
	\label{tab:listofinsturment}
\end{table}

\section{Parameters}
\label{Parameters}
\begin{table} [H]
	\begin{center}
	\begin{turn}{90}
	\begin{tabular}{|c|cc|}
		\hline
		portfolio & comment & indicator type  \\
		\hline
		
	 $\ARP\left(20\right)$ &for 20 days as relaxation time for the $\EMA$ when applying & $\EMA\left(\frac{1}{20}\right)$   \\              
 $\ARP\left(50\right)$& when applying &$\EMA\left(\frac{1}{50}\right)$  \\
	 $\ARP\left( 80 \right)$ &when applying& $\EMA\left(\frac{1}{80}\right)$  \\
		 $\ARP\left( 100 \right)$ &when applying &$\EMA\left(\frac{1}{100}\right)$  \\
	 $\ARP\left( 120 \right)$ &when applying &$\EMA\left(\frac{1}{120}\right)$  \\
		 $\ARP\left(150  \right)$& when applying& $\EMA\left(\frac{1}{150}\right)$  \\
		 $\ARP\left( 180  \right)$& when applying& $\EMA\left(\frac{1}{180}\right)$  \\
		 $\ARP\left( 400  \right)$ &when applying&$\EMA\left(\frac{1}{400}\right)$  \\
		 $\ARP\left( 1000  \right)$& when applying& $\EMA\left(\frac{1}{1000}\right)$  \\
	  $\ARP\left( 20, 120,0\times400\right)$ &when applying& $\MACD\left(\frac{1}{20},\frac{1}{120},\frac{1}{400},\omega_1,1,0\right)$  \\
		 $\ARP\left( 0\times20, 120,0\times400\right)$& when applying &$\MACD\left(\frac{1}{20},\frac{1}{120},\frac{1}{400},0,1,0\right)$  \\
	 $\ARP\left( 20, 120,0.2\times400\right)$& when applying& $\MACD\left(\frac{1}{20},\frac{1}{120},\frac{1}{400},\omega_1,1,0.2\right)$  \\
		 $\ARP\left( 20, 120,0.4\times400\right)$ &when applying& $\MACD\left(\frac{1}{20},\frac{1}{120},\frac{1}{400},\omega_1,1,0.4\right)$  \\
	 $\ARP\left( 20, 90,0.3\times400\right)$& when applying& $\MACD\left(\frac{1}{20},\frac{1}{90},\frac{1}{400},\omega_1,1,0.3\right)$  \\
		 $\ARP\left( 20, 80,0.3\times400\right)$& when applying& $\MACD\left(\frac{1}{20},\frac{1}{80},\frac{1}{400},\omega_1,1,0.3\right)$  \\
	 $\ARP\left( 20, 80,0.2\times400\right)$ &when applying& $\MACD\left(\frac{1}{20},\frac{1}{80},\frac{1}{400},\omega_1,1,0.2\right)$  \\
	 $\ARP\left( 20, 80,0.4\times400\right)$ &when applying &$\MACD\left(\frac{1}{20},\frac{1}{80},\frac{1}{400},\omega_1,1,0.4\right)$  \\
		
		\hline
	\end{tabular}
	\end{turn}
	\end{center}
	\caption{Different parameters fo the trend indicator}
	\label{tab:tab0}
\end{table}

\section{Results}
\label{results}

\begin{table} [H]
	\begin{center}
	\begin{turn}{90}
	\begin{tabular}{|c|c|c|c|}
		\hline
		portfolio & indicator type & Gross Sharpe ratio & average holding period (days)) \\
		\hline
		
		$\ARP\left( 20 \right)$ & $\EMA$  &                 1.079535 & 38\\
		$\ARP\left( 50\right)$ & $\EMA$   &                 1.189320 & 60 \\
		$\ARP\left( 80 \right)$ & $\EMA$   &                1.240349 & 74\\
		$\ARP\left( 100 \right)$ & $\EMA$  &                1.244945 &  81\\
		$\ARP\left( 120 \right)$ & $\EMA$   &               1.235455 & 88\\
		$\ARP\left( 150 \right)$ & $\EMA$   &               1.207496 &  96\\
		$\ARP\left( 180 \right)$ & $\EMA$  &                1.172569 & \\
		$\ARP\left( 400 \right)$ & $\EMA$   &               0.955223 & 132\\
		$\ARP\left( 1000 \right)$ & $\EMA$  &               0.633678 & 155\\
		$\ARP\left( 0\times20, 120,0\times400\right)$ & $\MACD$  &   1.235455 & 88\\
		$\ARP\left( 20, 120,0.2\times400\right)$ & $\MACD$&    1.203418 & 97\\
		$\ARP\left( 20, 120,0.4\times400\right)$ & $\MACD$ &   1.176466 & \\
		$\ARP\left( 20, 90,0.3\times400\right)$ & $\MACD$ &    1.214172 & 89\\
		$\ARP\left( 20, 80,0.3\times400\right)$ & $\MACD$&     1.218031 & 85\\
		$\ARP\left( 20, 80,0.2\times400\right)$ & $\MACD$ &    1.228186 & 82\\
		$\ARP\left( 20, 80,0.4\times400\right)$ & $\MACD$ &    1.206864 & 87\\
		
		\hline
	\end{tabular}
		\end{turn}
\end{center}
	\caption{Gross sharpe ratio based on the whole period 1990-2023}
	\label{tab:tab1}
\end{table}

 \begin{figure} [H]
	\centering
	\includegraphics[width=0.8\textwidth]{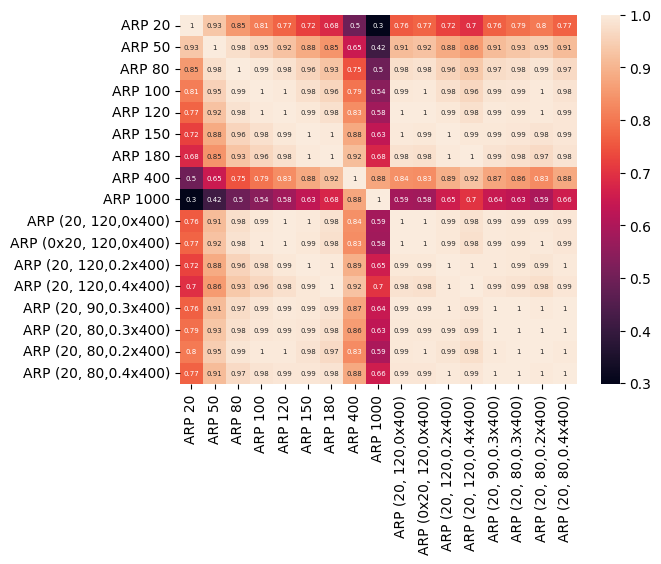}
	\caption{Empirical correlation on the whole period 1990-2023 between ARP models with different trend indicators parameters}
	\label{fig:heat_map}
\end{figure}



\begin{thebibliography}{5}
	
	

	
	
	
	
	
	
	
	
		
	\bibitem[Acar (1998)]{Acar98} E. Acar,  ``Expected returns of directional forecasters'', in Acar and Satchell eds,
	Advanced Trading Rules, Butterworth-Heinemann, Oxford, 51–80, 1998
	
	
	\bibitem[Acar (2003)]{Acar} E. Acar, ``Chapter 8 - modelling directional hedge funds-mean, variance and correlation
	with tracker funds''. In S. Satchell-A. Scowcroft (Eds.), Advances in portfolio construction and implementation (pp. 193–214). Oxford: Butterworth-Heinemann, 2003
	
\bibitem[Ayed et al (2016)]{Ayed} A. Bel Hadj Ayed, G. Loeper and F. Abergel ``Forecasting trends with asset prices'', \textit{Quantitative Finance}, 17(3), 369–382. https://doi.org/10.1080/14697688.2016.1206959, 2016





	\bibitem[Baltas (2015)]{Baltas}	N. Baltas,   ``Trend-following, Risk-parity and the influence of Correlations'',	In E. Jurczenko (ed.),	Risk-based and Factor Investing, ISTE Press \& Elsevier {\bf 3}, 65–96, 2015
	\bibitem[Benichou et al(2017)] {Benichou16}		R. Benichou, Y. Lemp\'eri\`ere, E. S\'eri\'e, J. Kockelkoren, P. Seager, J.-P. Bouchaud, and M. Potters,	``Agnostic Risk Parity: Taming Known and Unknown-Unknowns'',	\textit{Journal of Investments Strategies}, {\bf 6} (3), 1-12, 2017
	
	\bibitem[Benveniste et al(2024)] {Benveniste}		J. Benveniste, P. N. Kolm and G. Ritter,	``Untangling Universality and Dispelling Myths in Mean-Variance Optimization'', \textit{The Journal of Portfolio Management},  Special Issue Dedicated to Harry Markowitz, 50  ( 8) 90 - 116, 2024 
		\bibitem[Black and Litterman (1992)]{BlackLitterman92}
	F. Black and R. Litterman, ``Global Portfolio Optimization'', \textit{Financial Analysts Journal}, 48(5), 28–43, 1992.
	
	\bibitem[Bouchaud et al (2017)]{Bouchaud2017}	J.-P. Bouchaud, S. Ciliberti, Y. Lemperiere, A. Majewski, P. Seager, S. K. Ronia, ``Black was right: Price is within a factor 2 of Value'',  https://doi.org/10.48550/arXiv.1711.04717, 2017
	\bibitem[Brandt and Santa Clara (2006)]{Brandt} M W. Brandt and P. Santa Clara, ``Dynamic Portfolio Selection by Augmenting the Asset Space'', {\bf 61}, 2187-2217,   https://doi.org/10.1111/j.1540-6261.2006.01055.x, \textit{Journal of Finance}, 2006
	\bibitem[Bun et al(2016)]{Bun16}	J. Bun,  J.-P. Bouchaud, and M. Potters, ``Cleaning correlation matrices'',	\textit{Risk publication}, 2016
\bibitem[Buncic(2025)]{Buncic}	D. Buncic, ``Simplified: A closer Look at the Virtue of Compexity in Return Prediction'', working paper,	SSRN, 2025

	
	\bibitem[Dai et al(2010)]{Dai10}  M. Dai, Q. Zhang, and Q. J. Zhu , ``Trend Following Trading under a Regime Switching
	Model'', \textit{SIAM Journal on Financial Mathematics},{\bf 1}(1), 780–810, 2010
	\bibitem[Dai et al(2016)]{Dai16}  M. Dai, Z. Yang, Q. Zhang and Q. J. Zhu,   ``Optimal Trend Following Trading
	Rules'', \textit{Mathematics of Operations Research}, {\bf 41}(2), 626–642, 2016
	

	\bibitem[Elder (2025)] {Elder} B. Elder, ``Are bigger AI models better stock picker? Maybet probably not'', \textit{financial times}, 2025
\bibitem[El Ghaoui et al. (2003)]{ElGhaoui03}
L. El Ghaoui, M. Oks, and F. Oustry, ``Worst-Case Value-at-Risk and Robust Portfolio Optimization: A Conic Programming Approach'',  \textit{Operations Research}, 51(4), 543–556, 2003.


	\bibitem[Ferreira et al.(2018)]{Ferreira18} F. F. Ferreira, A. C. Silva and J. Y. Yen, ``Detailed study of a moving average trading rule'', \textit{Quantitative Finance}, 18(9), 1599–1617, 2018
	
		\bibitem[Ferson and Siegel (2001)]{Ferson}  W. E. Ferson,  A.F Siegel, ``The Efficient Use of Conditioning Information in Portfolios'', \textit{Journal of Finance}, 56 (3), 19–38, 2001
	
	\bibitem[Firoozye and Koshiyama (2020)]{Firoozye} N. Firoozye and A. Koshiyama, ``Optimal Dynamic Strategies on Gaussian Returns'', \textit{Journal of Investment Strategies}, {\bf 9}, 23-53,  10.21314/JOIS.2020.118, 2020
	\bibitem[Firoozye et al (2023)]{Firoozye23} N. Firoozye, V. Tan and S. Zohren, ``Canonical Portfolios: Optimal Asset and Signal Combination'', \textit{Journal of Banking \& Finance}, {\bf154}, jbankfin.2023.106952, 2023
	\bibitem[Fruhwirth-Schnatter (2006)] {Fruhwirth-Schnatter} S. Fruhwirth-Schnatter, ``Finite Mixture and Markov Switching Models'', Springer, New-York, 2006
	\bibitem[Giner and Zakamulin (2023)] {Giner} J. Giner and V. Zakamulin, ``A Regime-Switching Model of Stock Returns with    Momentum and Mean Reversion'', \textit{Economic Modelling}, {\bf 122}, 106237, 2023
	\bibitem[Gmür et al (2025)]{quantica}   B. Gmür, N. Mirjolet and L. Spiga,   ``The Speed Factor-  How Trend Model Speed Has Driven Performance Dispersion '', Quantica Quarterly Insights, November 2025
	\bibitem[Goldfarb and Iyengar (2003)]{Goldfarb03}
D. Goldfarb and G. Iyengar, ``Robust Portfolio Selection Problems'', \textit{Mathematics of Operations Research}, 28(1), 1–38, 2003.

	\bibitem[Grebenkov and Serror (2014)]{Grebenkov}		D. Grebenkov and J. Serror, 	``Following a trend with an exponential moving average: Analytical results for a Gaussian model'', 	\textit{Physica A: Statistical Mechanics and its Applications}, {\bf 394}, 288-303, 2014
	\bibitem[Grebenkov and Serror (2015)]{Grebenkov15}		D. Grebenkov and J. Serror, ``Optimal Allocation of Trend Following Strategies'', 	\textit{Physica A: Statistical Mechanics and its Applications}, {\bf 433}, 107-125, 2015.
	
	\bibitem[Harvey(1990)]{Harvey} A C.	Harvey  ``Forecasting, Structural Time Series Models and the Kalman Filter'', Cambridge University Press, 1990.
	\bibitem[Hurst et al(2013)]{Hurst13}		B. Hurst, Y. H. Ooi and L. H. Perdersen, ``Demystifying managed futures'', \textit{The Journal of portfolio management}, {\bf 11}, 42-58, 2013
	\bibitem[Hurst et al(2017)]{Hurst}		B. Hurst, Y. H. Ooi and L. H. Perdersen, ``A Century of Evidence on Trend-Following Investing'', \textit{The Journal of portfolio management}, {\bf 44}, 2017
			\bibitem[Jagannathan and Ma (2003)]{JagannathanMa03}
		R. Jagannathan and T. Ma, ``Risk Reduction in Large Portfolios: Why Imposing the Wrong Constraints Helps'',  \textit{Journal of Finance}, 58(4), 1651–1683, 2003.
		
		\bibitem[Jorion (1986)]{Jorion86}
		P. Jorion, ``Bayes-Stein Estimation for Portfolio Analysis'',  \textit{Journal of Financial and Quantitative Analysis}, 21(3), 279–292, 1986.
			\bibitem[Kan and Zhou (2007)]{KanZhou07}
		R. Kan and G. Zhou, ``Optimal Portfolio Choice with Parameter Uncertainty'',  \textit{Journal of Financial and Quantitative Analysis}, 42(3), 621–656, 2007.


		\bibitem[Kelly et al(2022)]{Kelly22}		B. Kelly, S. Malamud and L. Pedersen, 	``Principal Portfolios'', \textit{The Journal of finance}, {\bf 78},  347-387,https://doi.org/10.1111/jofi.13199, 2022
		\bibitem[Kelly et al(2023)]{Kelly}		B. Kelly, S. Malamud and K. Zhou, 	``The Virtue of Complexity in Return Prediction'', \textit{The Journal of finance}, {\bf 79},  459-503, 2023
	
	\bibitem[Kim and Omberg(1996)]{Kim}		T. S. Kim and E. Omberg, 	``Dynamic Nonmyopic Portfolio Behavior'', \textit{The Review of Financial Studies},, Volume 9, Issue 1, Pages 141–161, https://doi.org/10.1093/rfs/9.1.141, 1996
	\bibitem[Koshiyama and Firoozye(2019)]{Koshiyama}		A. Koshiyama and N. Firoozye, 	``Avoiding Backtesting Overfitting by Covariance-Penalties: An Empirical Investigation of the Ordinary and Total Least Squares Cases'', \textit{The Journal of Financial Data Science}, {\bf 1},  63-83, 2019
	\bibitem[Kurth et al (2026)]{Kurth}
J. G. Kurth, Z. Eisler, A. Rej and J.-P. Bouchaud, ``Is Trend Still Your Friend?: A Microstructural Account of the Demise of Short-Term Trend-Following'', https://doi.org/10.48550/arXiv.2607.01550, 2026.
\bibitem[Kurth et al (2026b)]{Kurth2} J. G. Kurth, A. A. Makewski, J.-P. Bouchaud, ``Revisiting the excess volatility puzzle through the lens of the Chiarella model'', \textit{PLoS One}, 21(1):e0340409. https://doi.org/10.1371/journal.pone.0340409, 2026
	\bibitem[Lakner(1998)]{Lakner}	P. Lakner, 	``Optimal trading strategy for an investor: the case of partial information'',	\textit{Stochastic Processes and their Applications},  Volume 76, Issue 1,Pages 77-97,
	ISSN 0304-4149,	https://doi.org/10.1016/S0304-4149(98)00032-5, 1998
	


	\bibitem[Lemperiere et al(2014)]{Lamperiere} Y. Lempérière, C. Deremble, P. Seager, M. Potters,  and J.-P. Bouchaud, ``Two centuries of trend following'', \textit{Journal of Investment Strategies}, {\bf 3}, 41-61, 2014
		\bibitem[Ledoit and Wolf (2003)]{LedoitWolf03}
	O. Ledoit and M. Wolf, ``Improved Estimation of the Covariance Matrix of Stock Returns with an Application to Portfolio Selection'', \textit{Journal of Empirical Finance}, 10(5), 603–621, 2003.
	
	\bibitem[Ledoit and Wolf (2004)]{LedoitWolf04}
	O. Ledoit and M. Wolf, ``A Well-Conditioned Estimator for Large-Dimensional Covariance Matrices'', \textit{Journal of Multivariate Analysis}, 88(2), 365–411, 2004.
	
	\bibitem[Majewski et al (2020)]{Majewski}	A. A. Majewski, S. Ciliberti, J.-P. Bouchaud,``Co-existence of trend and value in financial markets: Estimating an extended Chiarella model'', \textit{Journal of Economic Dynamics and Control},	Volume 112,103791, 	ISSN 0165-1889,	https://doi.org/10.1016/j.jedc.2019.103791.2020
	
	
	
		\bibitem[Markowitz (1952)]{Markowitz52}
	H. Markowitz, ``Portfolio Selection'', \textit{Journal of Finance}, 7(1), 77–91, 1952.
	
	
		\bibitem[Moskowitz et al(2012)]{Moskowitz} T. J. Moskowitz, Y. H. Ooi and L. H. Pedersen, ``Time series momentum'', \textit{Journal of Financial Economics}, {\bf 104}, Issue 2, 2012
		
		\bibitem[Moskowitz et al(2025)]{Moskowitz2} T. J. Moskowitz, R. Sabbatucci, A. Tamoni, B. Uhl, ``Nonlinear Time Series Momentum'', Available at SSRN: https://ssrn.com/abstract=5933974 or http://dx.doi.org/10.2139/ssrn.5933974, 2025
		\bibitem[Murphy(1999)]{Murphy} J. J. Murphy, ``Technical Analysis of Financial Markets'', NY Institute of Finance, Penguin Group, 1999
		\bibitem[Nguyen et al(2014a)]{Nguyen14a} D. Nguyen, J. Tie and  Q. Zhang, ``An Optimal Trading Rule Under a Switchable
	Mean-Reversion Model'', \textit{Journal of Optimization Theory and Applications}, {\bf 161}, 145-163, 2014
	
	\bibitem[Nguyen et al(2014b)]{Nguyen14b}   D. Nguyen, G. Yin and Q. Zhang,   ``A Stochastic Approximation Approach for Trend-Following Trading'', In Mamon, R. S. and Elliott, R. J. (Eds.), Hidden Markov
	Models in Finance: Further Developments and Applications, {\bf II},  167–184.
	Springer US, Boston, MA, 2014
	
	\bibitem[Rieder(2005)]{Rieder}   U. Rieder and N. Bauerle,  ``Portfolio Optimization with Unobservable Markov-Modulated Drift Process'', \textit{Journal of Applied Probability}, 42(2), 362–378. http://www.jstor.org/stable/30040798, 2005
	
		
	\bibitem[Rodriguez Dominguez et al(2025)]{RODRIGUEZDOMINGUEZ2025128633}   A. Rodriguez Dominguez, M. Shahzad and X. Hong,   ``Multi-hypothesis prediction for portfolio optimization: A structured ensemble learning approach to risk diversification'', \textit{Expert Systems with Applications}, 292 (128633), 2025
	
	\bibitem[Safari and Schmidhuber(2025)]{Safari} S. A. Safari and C. Schmidhuber, ``Trends and Reversion in Financial Markets on Time Scales from Minutes to Decades'', working paper,	arXiv:2501.16772 , 2025
	
	\bibitem[Schmidhuber(2021)]{Schmidhuber} C. Schmidhuber, ``Trends, reversion, and critical phenomena in financial markets.", \textit{Physica A: Statistical Mechanics and its Applications}, {\bf 566},  125642, 2021
	\bibitem[Schmidhuber(2022)]{Schmidhuber22} C. Schmidhuber, ``Financial markets and the phase transition between water and steam'', Physica A: Statistical Mechanics and its Applications {\bf 592}, 126873, 2022
	\bibitem[Segonne(2025)]{Segonne} F. Segonne, ``Basis Immunity: Isotropy as a Regularizer for Uncertainty'', https://arxiv.org/pdf/2511.13334, 2025
		\bibitem[Sepp and Lucic(2025)]{Sepp} A. Sepp and V. Lucic, ``The Science and Practice of Trend-following Systems'' , working paper, SSRN, 2025
	\bibitem[Tie and Zhang (2016)]{Tie16} J. Tie  and Q. Zhang,   ``An Optimal Mean-Reversion Trading Rule Under a Markov 	Chain Model'', \textit{Mathematical Control \& Related Fields}, {\bf 6} (3), 467–488, 2016
	
    \bibitem[Timmermann (2000)] {Timmermann} A. Timmermann, ``Moments of Markov Switching Models'', \textit{Journal of Econometrics}, {\bf 96}(1), 75–111, 2000
    	\bibitem[Tzotchev(2018)]{Tzotechev} D. Tzotchev, ``Designing Robust Trend-following System: Behind the Scenes of Trend-following'', working papers, SSRN, 2018
    \bibitem[Valeyre(2024)] {Valeyre}		S. Valeyre,  ``Optimal trend-following portfolios'',     \textit{Journal of Investments Strategies}, {\bf 12}, 1-21,  10.21314/JOIS.2023.008, 2024
    \bibitem[Wilder(1978)] {Wilder}		J. W. Wilder Jr., 	``New Concepts in Technical Trading Systemes Trend Research'', Greensboro, North Carolina, 1978
    	\bibitem[Zakamulin and Giner (2024)]{Zakamuli}	V. Zakamulin and J. Giner, ``Optimal Trend Following Rules in Two-State Regime-Switching Models'', \textit{Journal of Asset Management}, Springer, 2024
    	\bibitem[Zakamulin and Giner (2020)]{Zakamuli20}	V. Zakamulin and J. Giner, ``Trend Following with momentum versus moving averages: a tale of differences'', \textit{Quantitative Finance},  20:6, 985-1007, 2020
    	\bibitem[Zakamulin and Giner (2022)]{Zakamuli22}    	V. Zakamulin and J. Giner,  ``Time series momentum in the US stock market: Empirical evidence and theoretical analysis'', \textit{Int. Rev. Financ. Anal.}, 82, 102173, 2022

\end{thebibliography}
\end{document}